\newtheorem{thm}{Theorem}[section]
\newtheorem{lem}[thm]{Lemma}
\newtheorem{defn}[thm]{Definition}
\newtheorem{con}[thm]{Conjecture}
\title{Sparse Approximation is Provably Hard under Coherent Dictionaries}
\author{Ali {\c{C}}ivril}
\begin{document}

\maketitle

\begin{abstract}
It is well known that sparse approximation
problem is \textsf{NP}-hard under general dictionaries. Several algorithms
have been devised and analyzed in the past decade under various
assumptions on the \emph{coherence} $\mu$ of the dictionary
represented by an $M \times N$ matrix from which a subset of $k$
column vectors is selected. All these results assume $\mu=O(k^{-1})$.
This article is an attempt to bridge the big gap between the
negative result of \textsf{NP}-hardness under general dictionaries and the
positive results under this restrictive assumption. In particular,
it suggests that the aforementioned assumption might be asymptotically
the best one can make to arrive at any efficient algorithmic result
under well-known conjectures of complexity theory. In establishing the
results, we make use of a new simple multilayered PCP which is
tailored to give a matrix with small coherence combined with our
reduction.
\end{abstract}
%\begin{keywords}
%Sparse approximation, compressed sensing, complexity
%\end{keywords}
%
%\begin{AMS}
%68Q25
%\end{AMS}

%\pagestyle{myheadings}
%\thispagestyle{plain}
%\markboth{TEX PRODUCTION AND V. A. U. THORS}{SIAM MACRO EXAMPLES}

\section{Introduction}
Given a dictionary $\Phi$ with normalized columns,
represented by an $M \times N$ matrix ($\Phi \in \mathbb{R}^{M
\times N}$) and a target signal $y \in \mathbb{R}^M$ such that the columns
of $\Phi$ span $\mathbb{R}^M$, \emph{sparse
approximation problem} asks to find an approximate representation
of $y$ using a linear combination of at most $k$ atoms, i.e.
column vectors of $\Phi$. This amounts to finding a coefficient
vector $x \in \mathbb{R}^N$ for which one usually solves
\begin{equation}
\label{sparse} \min_{{\|x\|}_0 = k}{\|y-\Phi x\|}_2
\end{equation}

\noindent We name the problem with this standard objective
function (\ref{sparse}) as \textsc{Sparse}. Stated in linear
algebraic terms, it is essentially about picking a $k$-dimensional
subspace defined by $k$ column vectors of $\Phi$ such that the
orthogonal projection of $y$ onto that subspace is as close as
possible to $y$. The reader should note that the problem can be
defined with full generality using notions from functional
analysis (e.g. Hilbert spaces with elements representing
functions), as is usually conceived in signal processing. Indeed,
defined in Hilbert and Banach spaces, it has been studied as
\emph{highly nonlinear approximation} in functional approximation
theory \cite{Temlyakov1,Temlyakov2}. However, we consider linear
algebraic language as any kind of generalization is irrelevant to
our discussion and the negative results we will present can be
readily extended to the general case.

Although mainly studied in signal processing and approximation
theory, sparse approximation problem is of combinatorial nature in
finite dimensions and the optimal solution can be found by
checking all $\binom{N}{k}$ subspaces. It is natural to ask
whether one can do better and the answer partly lies in the fact
that \textsc{Sparse} is \textsf{NP}-hard even to approximate within any
factor \cite{Davis,Natarajan} under general dictionaries. The
intrinsic difficulty of the problem under this objective function
prevents one from designing algorithms which can even
\emph{approximate} the optimal solution. Hence, efforts have been
towards analyzing algorithms working on a restricted set of
dictionaries for which one requires the column vectors to be an
almost orthogonal set, namely an incoherent dictionary. Formally,
one defines the \emph{coherence} $\mu$ of a dictionary $\Phi$ as
$$
\mu(\Phi) = \max_{i\neq j} | \langle\Phi_i,\Phi_j\rangle |
$$
\noindent where $\Phi_i$ and $\Phi_j$ are the $i^{th}$ and
$j^{th}$ columns of $\Phi$, respectively and
$\langle\cdot,\cdot\rangle$ denotes the usual inner product defined on
$\mathbb{R}^{M}$. Recall that the columns of a dictionary have
unit norm. Hence, the coherence $\mu$ takes values in the $[0,1]$
closed interval.

There are roughly three types of algorithmic results regarding
sparse approximation problem as listed below.
\begin{enumerate}
\item Results relating the quality of the solution ${\|y-\Phi
x\|}_2$ produced by the algorithm to the quality of the optimal
solution ${\|y-\Phi x^*\|}_2$ given that $\mu(\Phi)$ is a slowly
growing or decreasing function of $k$.

\item Results showing the rate of convergence of an algorithm for
elements from a specific set related to the dictionary (e.g.
\cite{Liu-T})

\item Results stating conditions under which an algorithm
optimally recovers a signal either via norms of certain matrices
related to the dictionary (e.g. \cite{Tropp}) or the Restricted
Isometry Property (RIP) (e.g. \cite{Davenport}).
\end{enumerate}

The results of this paper are immediately related to the first
kind, which are expressed via Lebesgue-type inequalities as named
by Donoho et al. \cite{Lebesgue}. Accordingly, we shall
define the following:
\begin{defn}
An algorithm is an $(f(k),g(k))$-approximation algorithm for
\textsc{Sparse} under coherence $h(k)$ if it selects a vector $x$
with at most $g(k)$ nonzero elements from the dictionary $\Phi$
with $\mu(\Phi) \leq h(k)$ such that
$$
\|y-\Phi x \|_2 \leq f(k) \cdot \|y-\Phi x^*\|_2
$$
\noindent where $x^*$ is the optimal solution with at most $k$
nonzero elements.
\end{defn}

Orthogonal Matching Pursuit (OMP) is a well studied greedy
algorithm yielding such approximation guarantees. There is also
a slight variant of this algorithm named Orthogonal Least Squares
(OLS). In the last decade, the following results were found in
a series of papers by different authors:

\begin{thm} \cite{Gilbert}
OMP is an $(8\sqrt{k},k)$-approximation algorithm for
\textsc{Sparse} under coherence $\frac{1}{8\sqrt{2}(k+1)}$.
\end{thm}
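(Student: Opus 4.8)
The plan is to analyze the iterates of OMP directly and prove a dichotomy: at each iteration, either the current residual is already within the target factor of the optimum, or the greedy correlation test is forced to select an atom from the optimal support; since that support has at most $k$ atoms, iterating this for $k$ steps finishes the proof. Fix an optimal solution $x^{*}$ and let $\Lambda^{*}=\operatorname{supp}(x^{*})$, $|\Lambda^{*}|\le k$, so that $\Phi x^{*}=P_{\Lambda^{*}}y$ is the orthogonal projection of $y$ onto $V^{*}:=\operatorname{span}\{\Phi_{j}:j\in\Lambda^{*}\}$ and $e:=y-\Phi x^{*}\perp V^{*}$, $\mathrm{OPT}:=\|e\|_{2}$; write $r_{m}$ and $S_{m}$ for the residual and the selected set after $m$ iterations of OMP. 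I would first record the standard Gershgorin consequence of the hypothesis $\mu\le\frac{1}{8\sqrt{2}(k+1)}$: it gives $k\mu<\frac{1}{8\sqrt{2}}$, hence for every $S$ with $|S|\le k$ the Gram matrix $\Phi_{S}^{\mathsf T}\Phi_{S}$ has spectrum in $[1-k\mu,\,1+k\mu]$; in particular $\Phi_{\Lambda^{*}}$ has full column rank, so $P_{\Lambda^{*}}y$ is well defined and the least-squares subproblems of OMP are nonsingular.

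For the dichotomy, suppose inductively that $S_{m}\subseteq\Lambda^{*}$ (vacuous when $m=0$). Since $r_{m}=y-P_{S_{m}}y$ with $S_{m}\subseteq\Lambda^{*}$, one gets $P_{\Lambda^{*}}r_{m}=P_{\Lambda^{*}}y-P_{S_{m}}y$ and the orthogonal decomposition $r_{m}=e+P_{\Lambda^{*}}r_{m}$, so $\|P_{\Lambda^{*}}r_{m}\|_{2}^{2}=\|r_{m}\|_{2}^{2}-\mathrm{OPT}^{2}$. Next I would lower-bound the best correlation available inside $\Lambda^{*}$: writing $z$ for the coordinate vector of $P_{\Lambda^{*}}r_{m}$ in the basis $\Phi_{\Lambda^{*}}$ and using $\Phi_{\Lambda^{*}}^{\mathsf T}e=0$, we have $\Phi_{\Lambda^{*}}^{\mathsf T}r_{m}=\Phi_{\Lambda^{*}}^{\mathsf T}\Phi_{\Lambda^{*}}z$, so $\|\Phi_{\Lambda^{*}}^{\mathsf T}r_{m}\|_{2}^{2}\ge(1-k\mu)\|P_{\Lambda^{*}}r_{m}\|_{2}^{2}$ by the spectral bound, and averaging over the at most $k$ entries yields $\max_{j\in\Lambda^{*}}|\langle r_{m},\Phi_{j}\rangle|^{2}\ge\frac{1-k\mu}{k}\|P_{\Lambda^{*}}r_{m}\|_{2}^{2}$. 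Then I would upper-bound the best correlation outside $\Lambda^{*}$: for $j\notin\Lambda^{*}$ split $\langle r_{m},\Phi_{j}\rangle=\langle e,\Phi_{j}\rangle+\langle\Phi_{\Lambda^{*}}z,\Phi_{j}\rangle$, bound the first term by $\mathrm{OPT}$ and the second, via coherence on the off-support inner products and $\|z\|_{1}\le\sqrt{k}\,\|z\|_{2}\le\sqrt{k}\,\|P_{\Lambda^{*}}r_{m}\|_{2}/\sqrt{1-k\mu}$, by $\frac{\sqrt{k}\,\mu}{\sqrt{1-k\mu}}\,\|P_{\Lambda^{*}}r_{m}\|_{2}$. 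Now if $\|r_{m}\|_{2}>8\sqrt{k}\,\mathrm{OPT}$, then $\mathrm{OPT}<\|r_{m}\|_{2}/(8\sqrt{k})$ and $\|P_{\Lambda^{*}}r_{m}\|_{2}^{2}\ge(1-\frac{1}{64k})\|r_{m}\|_{2}^{2}$, and substituting these together with $k\mu<\frac{1}{8\sqrt{2}}$ makes the in-support lower bound strictly exceed the off-support upper bound; hence OMP selects some $j\in\Lambda^{*}$, and as the atoms of $S_{m}$ have zero correlation with $r_{m}$ this $j$ is new, so $S_{m+1}\subseteq\Lambda^{*}$.

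To finish, iterate the dichotomy. If $\|r_{m}\|_{2}\le 8\sqrt{k}\,\mathrm{OPT}$ for some $m\le k$, then since OMP residual norms are nonincreasing, $\|r_{k}\|_{2}\le 8\sqrt{k}\,\mathrm{OPT}$. Otherwise the dichotomy applies at every step up to $|\Lambda^{*}|\le k$, at which point OMP has selected all of $\Lambda^{*}$ and the residual equals $\|y-P_{\Lambda^{*}}y\|_{2}=\mathrm{OPT}$. In both cases OMP outputs a vector $x$ with $\|x\|_{0}\le k$ satisfying $\|y-\Phi x\|_{2}\le 8\sqrt{k}\,\|y-\Phi x^{*}\|_{2}$, as claimed. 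I expect the one genuine difficulty to be the numerology in the dichotomy: the in-support correlation lower bound is degraded both by the factor $\sqrt{(1-k\mu)/k}$ and by the passage from $\|r_{m}\|_{2}$ to $\|P_{\Lambda^{*}}r_{m}\|_{2}$, and one must verify that it still dominates the sum of the additive $\mathrm{OPT}$ term and the coherence leakage $\frac{\sqrt{k}\,\mu}{\sqrt{1-k\mu}}$ for precisely the constants $(8\sqrt{k},\frac{1}{8\sqrt{2}(k+1)})$ of the statement — each individual estimate is elementary, but making that final comparison close is where the care lies.
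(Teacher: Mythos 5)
This theorem is stated in the paper only as a citation to \cite{Gilbert} (Gilbert, Muthukrishnan, Strauss); the paper itself contains no proof of it, so there is no ``paper's own proof'' to compare against. Your blind argument is a correct and self-contained derivation, and it is structurally the same kind of analysis as in the cited source: a step-by-step dichotomy for OMP, with a Gershgorin spectral bound on the Gram matrix $\Phi_{\Lambda^*}^{\mathsf T}\Phi_{\Lambda^*}$, an averaging lower bound on the best in-support correlation, and a triangle-plus-coherence upper bound on any out-of-support correlation.

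The numerology does close, with considerable slack. The identity $\|P_{\Lambda^*}r_m\|_2^2=\|r_m\|_2^2-\mathrm{OPT}^2$ is valid because $S_m\subseteq\Lambda^*$ gives $P_{\Lambda^*}P_{S_m}=P_{S_m}$ and $e\perp V^*$. Writing $P_{\Lambda^*}r_m=\Phi_{\Lambda^*}z$ and $G=\Phi_{\Lambda^*}^{\mathsf T}\Phi_{\Lambda^*}$, one has $\|\Phi_{\Lambda^*}^{\mathsf T}r_m\|_2^2=z^{\mathsf T}G^2z\ge\lambda_{\min}(G)\,z^{\mathsf T}Gz\ge(1-k\mu)\|P_{\Lambda^*}r_m\|_2^2$ and $\|z\|_2\le\|P_{\Lambda^*}r_m\|_2/\sqrt{1-k\mu}$, both from the Gershgorin bound (which actually gives $1-(k-1)\mu$, so $1-k\mu$ is a safe weakening). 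Combining the three estimates, the required strict inequality when $\|r_m\|_2>8\sqrt k\,\mathrm{OPT}$ reduces, after multiplying by $\sqrt k/\|r_m\|_2$, to
\begin{equation*}
\frac{1-2k\mu}{\sqrt{1-k\mu}}\sqrt{1-\tfrac{1}{64k}}\;>\;\frac18,
\end{equation*}
and with $k\mu<\frac{1}{8\sqrt2}\approx 0.088$ the left side exceeds $0.8$, so the margin is wide. This looseness is also why Tropp later obtained $\sqrt{1+6k}$ under the weaker requirement $\mu\le\frac{1}{3k}$, as the paper records two theorems later. One small presentational point: the claim that the residual sequence is nonincreasing should be justified by the fact that OMP re-solves a least-squares problem over a nested increasing family of subspaces, so $\|r_{m+1}\|_2\le\|r_m\|_2$ even after all of $\Lambda^*$ has been captured; with that observation, your concluding case split is complete.
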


\begin{thm} \cite{Tropp}
OMP is a $(\sqrt{1+6k},k)$-approximation algorithm for
\textsc{Sparse} under coherence $\frac{1}{3k}$.
\end{thm}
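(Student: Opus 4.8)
\emph{Overall strategy.} The plan is to run OMP for exactly $k$ iterations and argue by the dichotomy at the core of Tropp's analysis: at each iteration, \emph{either} the current residual already lies within the factor $\sqrt{1+6k}$ of the optimal residual, \emph{or} the greedy step selects a column from the support $\Lambda_{opt}$ of an optimal $k$-term solution. Since OMP's residual norms are non-increasing and OMP never reselects a column, if the second alternative holds throughout then the selected support coincides with $\Lambda_{opt}$ and the final residual is the optimal one; and as soon as the first alternative occurs at some iteration, the residual thereafter — in particular the final one — is at most $\sqrt{1+6k}$ times optimal. So in all cases OMP returns an $x$ with exactly $k$ nonzeros and $\|y-\Phi x\|_2\le\sqrt{1+6k}\,\|y-\Phi x^{*}\|_2$.

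\emph{Setup and invariant.} Fix an optimal $x^{*}$ with support $\Lambda_{opt}$, which we may take of size exactly $k$ and to give the least-squares fit on its support; since $\mu\le \frac{1}{3k}<\frac{1}{k-1}$ the columns $\Phi_{\Lambda_{opt}}$ are linearly independent. Let $P$ be the orthogonal projector onto $\mathrm{span}(\Phi_{\Lambda_{opt}})$ and set $\nu:=(I-P)y$, so $\|\nu\|_2=\|y-\Phi x^{*}\|_2$. Suppose that after $t<k$ iterations OMP has selected $\Lambda_t\subseteq\Lambda_{opt}$ with residual $r_t=(I-P_{\Lambda_t})y$. Since $\mathrm{range}(\Phi_{\Lambda_t})\subseteq\mathrm{range}(\Phi_{\Lambda_{opt}})$ one checks $(I-P)r_t=\nu$, hence $r_t=w_t+\nu$ with $w_t:=Pr_t\in\mathrm{span}(\Phi_{\Lambda_{opt}})$, $w_t\perp\nu$, and $\|r_t\|_2^2=\|w_t\|_2^2+\|\nu\|_2^2$; if $w_t=0$ then $r_t=\nu$ and we are done, so assume $w_t\neq0$. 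Writing $w_t=\Phi_{\Lambda_{opt}}c$ and $G=\Phi_{\Lambda_{opt}}^{\ast}\Phi_{\Lambda_{opt}}$, and using $\Phi_{\Lambda_{opt}}^{\ast}\nu=0$, the greedy comparison reduces to bounding $\max_{i\notin\Lambda_{opt}}|\langle\Phi_i,r_t\rangle|$ against $D:=\|G c\|_\infty=\max_{i\in\Lambda_{opt}}|\langle\Phi_i,r_t\rangle|$.

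\emph{The two estimates I would prove.} First, from $\mu\le\frac{1}{3k}$ and a Neumann-series bound on $G^{-1}=(I+(G-I))^{-1}$, the Exact Recovery Coefficient satisfies $\mathrm{ERC}(\Lambda_{opt})=\max_{i\notin\Lambda_{opt}}\|\Phi_{\Lambda_{opt}}^{\dagger}\Phi_i\|_1\le\frac{k\mu}{1-(k-1)\mu}\le\frac{k}{2k+1}<\tfrac12$, and $\lambda_{\min}(G)\ge 1-(k-1)\mu>\tfrac23$. Via the identity $\langle\Phi_i,w_t\rangle=(\Phi_{\Lambda_{opt}}^{\dagger}\Phi_i)^{\ast}(\Phi_{\Lambda_{opt}}^{\ast}w_t)$ and H\"older, $\max_{i\notin\Lambda_{opt}}|\langle\Phi_i,w_t\rangle|\le\mathrm{ERC}(\Lambda_{opt})\cdot D<\tfrac12 D$, and since each $|\langle\Phi_i,\nu\rangle|\le\|\nu\|_2$, we get $\max_{i\notin\Lambda_{opt}}|\langle\Phi_i,r_t\rangle|<\tfrac12 D+\|\nu\|_2$. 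Second — and this is the one step that matters for the constant $6$ rather than something larger — instead of the crude $\|Gc\|_\infty\ge\|Gc\|_2/\sqrt k\ge\lambda_{\min}(G)\|c\|_2/\sqrt k$, I would apply H\"older to $\|w_t\|_2^2=c^{\ast}Gc\le\|c\|_1\|Gc\|_\infty$ and combine with $\|c\|_1\le\sqrt k\,\|c\|_2\le\sqrt k\,\|w_t\|_2/\sqrt{\lambda_{\min}(G)}$ to get the sharper $D\ge\sqrt{\lambda_{\min}(G)/k}\;\|w_t\|_2>\sqrt{2/(3k)}\;\|w_t\|_2$.

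\emph{Dichotomy and conclusion.} Suppose $\|r_t\|_2\ge\sqrt{1+6k}\,\|\nu\|_2$, i.e. $\|w_t\|_2^2=\|r_t\|_2^2-\|\nu\|_2^2\ge 6k\,\|\nu\|_2^2$. Then $D>\sqrt{2/(3k)}\cdot\sqrt{6k}\,\|\nu\|_2=2\|\nu\|_2$, so $\max_{i\notin\Lambda_{opt}}|\langle\Phi_i,r_t\rangle|<\tfrac12 D+\tfrac12 D=D$, whence OMP's $(t{+}1)$-st column lies in $\Lambda_{opt}$ and the invariant $\Lambda_{t+1}\subseteq\Lambda_{opt}$ persists. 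Inducting on $t$: if $\|r_t\|_2\ge\sqrt{1+6k}\,\|\nu\|_2$ for every $t<k$, then $\Lambda_k=\Lambda_{opt}$ and $\|r_k\|_2=\|\nu\|_2$; otherwise, at the first $t$ with $\|r_t\|_2<\sqrt{1+6k}\,\|\nu\|_2$ the inductive hypothesis gives $\Lambda_t\subseteq\Lambda_{opt}$, and monotonicity of OMP residuals yields $\|r_k\|_2\le\|r_t\|_2<\sqrt{1+6k}\,\|\nu\|_2$. Either way $\|y-\Phi x\|_2\le\sqrt{1+6k}\,\|y-\Phi x^{*}\|_2$ with $\|x\|_0=k$, which is the claim. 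I expect the main obstacle to be bookkeeping rather than any single inequality: one must carefully maintain the invariant $\Lambda_t\subseteq\Lambda_{opt}$ through the induction and dispose of the degenerate cases ($w_t=0$, the residual meeting the threshold exactly, $\nu=0$, or OMP reaching a zero residual early), while the analytic content is confined to the two self-contained estimates above.
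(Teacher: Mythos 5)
The paper states this result without proof, citing Tropp's ``Greed is good'' paper, and your argument is a faithful reconstruction of Tropp's own analysis: the orthogonal split $r_t=w_t+\nu$ of the residual relative to the optimal span, the Exact Recovery Coefficient bound $\mathrm{ERC}\le k\mu/(1-(k-1)\mu)<1/2$ via a Neumann series, the lower bound $D\ge\sqrt{\lambda_{\min}(G)/k}\,\|w_t\|$ (your H\"older route is equivalent to Tropp's route through $\|\Phi_{\Lambda_{opt}}^{\ast}w_t\|_2\ge\sqrt{\lambda_{\min}(G)}\,\|w_t\|_2$ on the range), and the dichotomy-plus-induction that either $\Lambda_k=\Lambda_{opt}$ or the residual has already dropped below the threshold. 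The constants track correctly ($\lambda_{\min}>2/3$ gives $D>\sqrt{2/(3k)}\,\|w_t\|$, $\mathrm{ERC}<1/2$, and $\|w_t\|^2\ge 6k\|\nu\|^2$ forces the greedy pick inside $\Lambda_{opt}$), so the proposal is correct and follows essentially the same approach as the cited source.
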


\begin{thm} \cite{Lebesgue}
OMP is a $(24,\lfloor k \log{k}\rfloor)$-approximation algorithm
for \textsc{Sparse} under coherence $\frac{1}{90k^{3/2}}$.
\end{thm}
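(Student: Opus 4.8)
The plan is to run OMP for $m:=\lfloor k\log k\rfloor$ iterations, maintaining the chosen set $\Lambda_t$ and the least-squares residual $r_t=(I-P_{\Lambda_t})y$, and to compare $\|r_m\|_2$ with the optimal error $\|r^*\|_2$, where $r^*=(I-P_T)y$ and $T=\operatorname{supp}(x^*)$, $|T|\le k$. Two consequences of the coherence hypothesis drive everything. First, for any index set $S$ with $|S|\le k+m=O(k\log k)$, Gershgorin applied to $\Phi_S^{\top}\Phi_S-I$ gives $\|\Phi_Sv\|_2^2=(1\pm\delta)\|v\|_2^2$ with $\delta\le(|S|-1)\mu=O(\log k/\sqrt k)=o(1)$; it is precisely the exponent $3/2$ in $\mu\le 1/(90k^{3/2})$ that keeps $\Theta(k\log k)$ atoms jointly near-orthonormal, whereas the classical $\mu=O(1/k)$ only controls $O(k)$ of them. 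Second, $\mu\le 1/(3k)$, so the cited theorem of Tropp applies to the first $k$ iterations and furnishes the warm start $\|r_k\|_2^2\le(1+6k)\|r^*\|_2^2$. Throughout I may assume $\|r_t\|_2>24\|r^*\|_2$ for every $t\le m$, since $\|r_t\|_2$ is nonincreasing.

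The core is a comparison of the correlation of $r_t$ against optimal versus non-optimal atoms. Write $r_t-r^*=\Phi z$ with $z=x^*-x_t$, supported on $T\cup\Lambda_t$ and equal to $x^*$ on $T\setminus\Lambda_t$. Then $\langle r_t,\Phi z\rangle=\|r_t\|_2^2-\langle r_t,r^*\rangle\ge(1-\tfrac1{24})\|r_t\|_2^2$, while $\langle r_t,\Phi z\rangle=\langle\Phi_{T\setminus\Lambda_t}^{\top}r_t,\,x^*_{T\setminus\Lambda_t}\rangle$ because $\Phi_{\Lambda_t}^{\top}r_t=0$; since $\|x^*_{T\setminus\Lambda_t}\|_2\le\|z\|_2\le(1-\delta)^{-1/2}\|\Phi z\|_2\le(1+o(1))(1+\tfrac1{24})\|r_t\|_2$, Cauchy--Schwarz forces some unselected optimal atom to satisfy $|\langle r_t,\Phi_j\rangle|\ge\tfrac{c}{\sqrt{|T\setminus\Lambda_t|}}\,\|r_t\|_2$ with $c=(1-o(1))\tfrac{23}{25}$. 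On the other hand, for an atom $j\notin T\cup\Lambda_t$ one has $|\langle r_t,\Phi_j\rangle|\le\|r^*\|_2+\mu\|z\|_1\le(\tfrac1{24}+o(1))\|r_t\|_2$. Because OMP always picks the globally largest correlation, setting $\rho_t:=|\langle r_t,\Phi_{j_{t+1}}\rangle|/\|r_t\|_2$ we get (i) $\|r_{t+1}\|_2^2\le(1-\rho_t^2)\|r_t\|_2^2$ and $\rho_t\ge c/\sqrt{|T\setminus\Lambda_t|}$ (the Pythagorean identity for the re-projection plus $\|(I-P_{\Lambda_t})\Phi_{j_{t+1}}\|_2\le 1$), and (ii) OMP selects a non-optimal atom (``wastes'' an iteration) only while $|T\setminus\Lambda_t|\ge(24c)^2(1-o(1))\approx 487$.

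Finally I would organize the run into ``levels'': a level is a maximal block of consecutive iterations during which $n:=|T\setminus\Lambda_t|$ is constant, and it ends with a ``find'' that selects a new optimal atom and drops us to level $n-1$ (reaching $n=0$ forces $\|r_t\|_2\le\|r^*\|_2$, so we are done). Within a level, each $\langle r_t,\Phi_j\rangle$ with $j\in T$ moves only by a $\mu$-sized amount per step while $\|r_t\|_2$ contracts by $\sqrt{1-\rho_t^2}$, so the best optimal-correlation \emph{ratio} $\rho_t^{\mathrm{opt}}$ obeys $1/(\rho_{t+1}^{\mathrm{opt}})^2\le 1/(\rho_t^{\mathrm{opt}})^2-1+o(1)$; since a level can persist only while $\rho_t^{\mathrm{opt}}\le\tfrac1{24}$, i.e. $1/(\rho_t^{\mathrm{opt}})^2\ge 576$, and starts with $1/(\rho_t^{\mathrm{opt}})^2\le n/c^2\le k/c^2$, each level consumes at most $\approx k/c^2=O(k)$ iterations, while telescoping $\|r_{t+1}\|_2^2=(1-\rho_t^2)\|r_t\|_2^2$ over the level multiplies $\|r_t\|_2^2$ by $(\rho_{\mathrm{start}}^{\mathrm{opt}}/\rho_{\mathrm{end}}^{\mathrm{opt}})^2$. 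A worst-case dictionary can arrange that a level is exited after dropping $\|r_t\|_2^2$ by only a constant factor at cost $\Theta(k)$ iterations, so bringing the warm-start factor $1+6k$ down to $24^2$ may need $\Theta(\log k)$ levels --- exactly $\Theta(k\log k)$ iterations, matching the budget $\lfloor k\log k\rfloor$; the handful of small $k$ for which $\lfloor k\log k\rfloor$ is not comfortably larger than this count can be checked by hand.

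The main obstacle is the level analysis: one must prove cleanly that the optimal-support correlations ``catch up'' as the residual shrinks --- that $1/(\rho^{\mathrm{opt}})^2$ really drops by essentially one per wasted iteration --- while controlling the $o(1)$ coherence errors that accumulate over all $\Theta(k\log k)$ iterations (this is where the gap between $1/(90k^{3/2})$ and $1/(3k)$ is spent), and then calibrating the three free constants --- the target quality $24$, the threshold $\approx 487$ below which no iteration can be wasted, and the warm-start factor $1+6k$ --- so that the bookkeeping closes for every $k$. The remaining ingredients are just Gershgorin, Cauchy--Schwarz, and the Pythagorean identity for the re-projection step.
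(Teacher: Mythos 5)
This theorem is one of the background results the paper merely \emph{quotes} from Donoho--Elad--Temlyakov~\cite{Lebesgue}; the paper does not prove it, so there is no internal proof to compare against. Your sketch does follow the right outline for a result of this type --- warm start from Tropp's $k$-step bound, a Cauchy--Schwarz lower bound $c/\sqrt{|T\setminus\Lambda_t|}$ on the best unselected optimal correlation, the $\|r^*\|_2+\mu\|z\|_1$ upper bound on non-optimal correlations, and geometric residual decay while $\|r_t\|_2>24\|r^*\|_2$ --- and you are right that these are the ingredients. But the pieces you flag as unfinished are genuine gaps, and one of them is not just unfinished but inconsistent.

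The level bookkeeping contradicts itself: by your own telescoping, a level that lasts $w\approx k/c^2$ iterations multiplies $\|r_t\|_2^2$ by roughly $576c^2/k=\Theta(1/k)$, not ``a constant factor,'' so a worst case of ``$\Theta(\log k)$ levels of $\Theta(k)$ iterations each'' cannot occur (after two such levels the residual would already be far below $24\|r^*\|_2$). The correct way to extract the $\lfloor k\log k\rfloor$ budget is to drop the level decomposition and telescope once over all post-warm-start iterations: every step with $\|r_t\|_2>A\|r^*\|_2$ and $T\not\subset\Lambda_t$ has $\rho_t^2\ge c^2/k$, hence $\|r_m\|_2^2\le(1+6k)\|r^*\|_2^2(1-c^2/k)^{m-k}$, and one checks that $m=\lfloor k\log k\rfloor$ pushes this below $A^2\|r^*\|_2^2$. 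That check fails with your $A=24$, $c=(A-1)/(A+1)=23/25$, since $c^2<1$ leaves a residual of order $k^{1-c^2}\|r^*\|_2^2$ growing without bound; so either $A$ must be enlarged or the Cauchy--Schwarz step sharpened --- this is precisely the calibration you admit is unfinished. Finally, the ``$o(1)$'' in the recursion $1/(\rho_{t+1}^{\mathrm{opt}})^2\le1/(\rho_t^{\mathrm{opt}})^2-1+o(1)$ is really an $O(\mu k^{3/2})=O(1)$ constant once the prefactor $1/(\rho_t^{\mathrm{opt}})^2\lesssim k/c^2$ is carried through; it is small (about $2/(90c^3)$) but not vanishing. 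None of this is fatal to the strategy, but as written the sketch does not establish the stated $(24,\lfloor k\log k\rfloor)$ guarantee.
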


\begin{thm} \cite{Zheltov}
OMP is a $(3,2^{\lfloor \frac{1}{\delta} \rfloor}k)$-approximation
algorithm for \textsc{Sparse} under coherence
$\frac{1}{14\left(2^{\lfloor \frac{1}{\delta} \rfloor}
k\right)^{1+\delta}}$, for any fixed $\delta > 0$.
\end{thm}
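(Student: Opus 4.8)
\noindent The plan is to run OMP for exactly $m := 2^{\lfloor 1/\delta\rfloor}k$ iterations — for a fixed $\delta$ this is $\Theta(k)$ — and to prove that the residual $r_m = y - \Phi x$ it outputs satisfies $\|r_m\|_2 \le 3\sigma$, where $\sigma := \|y - \Phi x^*\|_2$. Write $y = y^* + r^*$, with $y^* = \Phi_{\Lambda^*}\alpha$ the best $k$-term approximation ($|\Lambda^*| = k$) and $r^* = y - y^* \perp \mathrm{span}(\Phi_{\Lambda^*})$, so $\|r^*\|_2 = \sigma$. Two elementary facts about the orthogonal greedy step drive everything: (i) because $r_t$ is the orthogonal projection of $y$ off the span of all atoms selected so far, $\|r_{t-1}\|_2^2 - \|r_t\|_2^2 \ge \max_j |\langle r_{t-1}, \Phi_j\rangle|^2$; and (ii) since $r_{t-1}$ is orthogonal to the selected atoms we have $\langle r_{t-1}, y\rangle = \|r_{t-1}\|_2^2$, hence $\langle r_{t-1}, y^*\rangle = \|r_{t-1}\|_2^2 - \langle r_{t-1}, r^*\rangle \ge \|r_{t-1}\|_2^2 - \|r_{t-1}\|_2\sigma$, which is at least $\tfrac12\|r_{t-1}\|_2^2$ whenever $\|r_{t-1}\|_2 \ge 2\sigma$. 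Fact (ii) means that as long as the error is not yet within a constant factor of $\sigma$, the $k$ optimal atoms hold $\Omega(\|r_{t-1}\|_2^2)$ of correlation energy with the residual, so some atom of $\Lambda^*$ receives a large greedy score.

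Combining (i) and (ii) with the crude bound $\langle r_{t-1},y^*\rangle \le \|\alpha\|_1\max_j|\langle r_{t-1},\Phi_j\rangle|$ only yields the classical weak-greedy rate, which forces roughly $k\|y\|_2^2/\sigma^2$ iterations — unbounded in the instance. The real content, and the reason for the exotic iteration count $2^{\lfloor1/\delta\rfloor}k$, is a \emph{multiscale block-progress lemma}. I would partition the $m$ steps into $\lfloor 1/\delta\rfloor$ stages of lengths $k, 2k, 4k, \dots, 2^{\lfloor 1/\delta\rfloor - 1}k$ (summing to $(2^{\lfloor 1/\delta\rfloor}-1)k < m$) and show: upon completing stage $i$, either the residual has already fallen below $3\sigma$ — in which case we stop, since later orthogonal projections only decrease it — or a potential of the form $\Psi_t = \log_2(\|r_t\|_2^2/\sigma^2)$, truncated at its target value $\log_2 9$, has decreased enough that after all $\lfloor 1/\delta\rfloor$ stages it has reached that target, i.e. $\|r_m\|_2 \le 3\sigma$. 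The doubling of block lengths reflects that a greedy run can temporarily be misdirected onto atoms outside $\Lambda^*$, and under the worst case the number of wasted steps can grow geometrically across scales; a block of doubled length is what compensates.

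This is exactly where the coherence hypothesis is used, and why its exponent is $1+\delta$. OMP selects at most $m$ atoms, so every linear system that arises in the analysis involves at most $m + k = O(m)$ columns, and with $\mu \le \tfrac{1}{14}m^{-(1+\delta)}$ each such Gram matrix lies within $O(m\mu) = O(m^{-\delta})$ of the identity. Consequently all the projections, the coefficient norm $\|\alpha\|_2$, the passage from $\langle r_t, y^*\rangle$ to $\max_j|\langle r_t,\Phi_j\rangle|$ on the not-yet-captured part of $\Lambda^*$, and the per-step residual decrease in (i) are controlled up to factors $1 + O(m^{-\delta})$. Because this slack is only polynomially small in $m$, one can afford exactly $\lfloor 1/\delta\rfloor$ doubling stages before the accumulated multiplicative losses would eat the available margin; in turn the total budget $m = 2^{\lfloor 1/\delta\rfloor}k$ is just large enough to run them all. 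Putting the pieces together: the output has $\|x\|_0 \le m = 2^{\lfloor 1/\delta\rfloor}k$, the dictionary has $\mu(\Phi) \le \tfrac{1}{14}(2^{\lfloor 1/\delta\rfloor}k)^{-(1+\delta)}$, and $\|y - \Phi x\|_2 \le 3\sigma$, which is precisely the asserted $(3,\,2^{\lfloor 1/\delta\rfloor}k)$-approximation under coherence $\tfrac{1}{14(2^{\lfloor1/\delta\rfloor}k)^{1+\delta}}$.

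The main obstacle is the block-progress lemma itself: quantifying how many steps of a block OMP can squander outside $\Lambda^*$ as a function of the current scale, showing incoherence caps this at a constant-factor penalty per doubling, and verifying that the chosen block lengths are simultaneously sufficient (the residual genuinely drops) and not profligate (the stage count really is $\lfloor1/\delta\rfloor$). The subsidiary difficulty is numerical: steering the accumulated $1 + O(m^{-\delta})$ losses so that the bound closes with the clean constants — the $14$ in the coherence condition and the final factor $3$ — rather than with unspecified absolute constants; it is this optimization that pins the coherence exponent at $1+\delta$.
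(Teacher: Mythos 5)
This statement is not proved in the paper at all: it appears as one of five quoted prior results in the introduction, credited to \cite{Zheltov}, and serves only to motivate the hardness results that the paper actually establishes. There is therefore no in-paper proof for your argument to be compared against; you would need to consult Zheltov's paper for the actual proof.

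Taken on its own terms, your sketch is a plausible outline of the kind of argument that could yield such a bound, and your facts (i) and (ii) about the orthogonal greedy step are correct. But the pivotal claim --- the ``multiscale block-progress lemma,'' i.e.\ the assertion that across $\lfloor 1/\delta\rfloor$ geometrically lengthening blocks the truncated potential $\Psi_t = \log_2(\|r_t\|_2^2/\sigma^2)$ drops by a quantified amount per block under the stated coherence --- is stated, not proved, and you acknowledge as much when you call it ``the main obstacle.'' Likewise the claim that $\mu \le \tfrac{1}{14}m^{-(1+\delta)}$ controls all the relevant projections and Gram inverses ``up to factors $1 + O(m^{-\delta})$'' and that this slack permits exactly $\lfloor 1/\delta\rfloor$ doublings is asserted without any derivation; in particular nothing in the sketch pins down the constant $14$ or the approximation factor $3$, which you also flag as an open difficulty. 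What you have is an architecture for a proof with its central lemma and all quantitative bookkeeping deferred, so there is a genuine gap: the argument as written does not establish the theorem, and the exotic parameters $2^{\lfloor 1/\delta\rfloor}k$ and $1+\delta$ that are the whole point of the statement remain unexplained beyond a heuristic.
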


\begin{thm} \cite{Livshitz}
OLS is a $(3,2k)$-approximation algorithm for \textsc{Sparse}
under coherence $\frac{1}{20k}$.
\end{thm}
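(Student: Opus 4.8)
\medskip
\noindent\textbf{Proof plan.}
Let $\Lambda^{*}$ be the support of $x^{*}$ (so $|\Lambda^{*}|\le k$), put $y^{*}=\Phi x^{*}$ and $r^{*}=y-y^{*}$, so the target to beat is $\|r^{*}\|$. In a run of OLS let $S_{t}$ be the set of atoms chosen after $t$ iterations, $P_{t}$ the orthogonal projection onto their span, $r_{t}=y-P_{t}y$ the residual (thus $r_{0}=y$ and $\|r_{t}\|$ is non-increasing since $S_{t}\subseteq S_{t+1}$), $\widetilde\Phi^{(t)}_{i}=\Phi_{i}-P_{t}\Phi_{i}$ the part of an atom orthogonal to the current span, and $m_{t}=|S_{t}\cap\Lambda^{*}|$ the number of optimal atoms already captured. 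Recall that one OLS step appends the index $i$ maximizing $|\langle r_{t},\Phi_{i}\rangle|/\|\widetilde\Phi^{(t)}_{i}\|$ and thereby decreases $\|r_{t}\|^{2}$ by exactly that quantity squared, hence by at least $\max_{i}|\langle r_{t},\Phi_{i}\rangle|^{2}$. The plan is to run OLS for $2k$ iterations, producing a vector with at most $2k$ nonzeros, and to prove $\|r_{2k}\|\le 3\|r^{*}\|$.

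The technical workhorse is a conditioning estimate from Gershgorin's disk theorem: for any index set $T$ with $|T|\le 3k$, every eigenvalue of the Gram matrix $\Phi_{T}^{\top}\Phi_{T}$ lies in $[1-(|T|-1)\mu,\;1+(|T|-1)\mu]\subseteq[1-\tfrac{3}{20},\,1+\tfrac{3}{20}]$, because $\mu\le\frac{1}{20k}$. I will use this only for sets $T=S_{t}\cup\Lambda^{*}$, which always have size $<3k$, and extract three corollaries: (i)~$\tfrac{9}{10}\|z\|\le\|\Phi_{T}z\|\le\tfrac{11}{10}\|z\|$ for all $z$; (ii)~for $i\notin T$, $\|P_{T}\Phi_{i}\|=O(\mu\sqrt{k})=O(k^{-1/2})$, hence $\|\widetilde\Phi^{(t)}_{i}\|\ge 1-O(k^{-1})$ for every unselected $i$; (iii)~$|\langle\widetilde\Phi^{(t)}_{j},\Phi_{i}\rangle|=O(\mu)$ whenever $j\in\Lambda^{*}\setminus S_{t}$ and $i\notin\Lambda^{*}\cup S_{t}$.

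Now fix an iteration and set $\rho_{t}=y-P_{S_{t}\cup\Lambda^{*}}y$, $\beta_{t}=\|\rho_{t}\|^{2}$, $\gamma_{t}=\|r_{t}\|^{2}-\beta_{t}$. Then $\beta_{t}\le\|r^{*}\|^{2}$, $\beta_{t}$ is non-increasing, and $r_{t}-\rho_{t}=P_{S_{t}\cup\Lambda^{*}}r_{t}$ lies in $\operatorname{span}\{\widetilde\Phi^{(t)}_{j}:j\in\Lambda^{*}\setminus S_{t}\}$ with $\|r_{t}-\rho_{t}\|^{2}=\gamma_{t}$, so $\gamma_{t}$ is precisely the residual energy still reachable through the not-yet-selected optimal atoms. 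Two one-step estimates follow. \emph{Progress:} expanding $r_{t}-\rho_{t}$ in the frame $\{\widetilde\Phi^{(t)}_{j}\}_{j\in\Lambda^{*}\setminus S_{t}}$, which is well conditioned by~(i), and applying Cauchy--Schwarz, gives $\max_{j\in\Lambda^{*}\setminus S_{t}}|\langle r_{t},\Phi_{j}\rangle|\ge\tfrac{9}{10}\sqrt{\gamma_{t}/(k-m_{t})}$, whence $\gamma_{t+1}\le(1-\tfrac{c}{\,k-m_{t}\,})\gamma_{t}+(\beta_{t}-\beta_{t+1})$ with $c=\tfrac{81}{100}$. \emph{Selection:} by~(ii)--(iii), $|\langle r_{t},\Phi_{i}\rangle|\le\sqrt{\beta_{t}}+\tfrac{1}{10}\sqrt{\gamma_{t}/k}$ for every $i\notin\Lambda^{*}$; comparing with the progress estimate, once $\gamma_{t}\ge C k\beta_{t}$ (an absolute constant $C<2$ suffices, and this comparison is exactly where the constant $20$ rather than a looser one in the coherence hypothesis is spent) the maximal correlation is attained inside $\Lambda^{*}\setminus S_{t}$, so OLS picks an optimal atom and $m_{t+1}=m_{t}+1$.

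Chaining these over the $2k$ iterations, the ``easy half'' goes through cleanly: while $\gamma_{t}\ge Ck\beta_{t}$ the selection estimate forces $m_{t}$ to climb, so after at most $k$ such iterations either $\Lambda^{*}\subseteq S_{t}$ (whence $\|r_{t}\|\le\|r^{*}\|$ and we are done) or else $\gamma_{t}<Ck\beta_{t}$, i.e.\ $\|r_{t}\|^{2}<(Ck+1)\|r^{*}\|^{2}$; and the remaining $\ge k$ iterations then push $\gamma_{t}$ down geometrically through the progress estimate, the factor $2$ in ``$2k$'' providing exactly the slack for this second phase. \emph{The step I expect to be the real obstacle is upgrading this to the constant factor $3$ rather than an $O(\sqrt{k})$ factor:} converting the $\ell_{2}$-mass $\gamma_{t}$ of the uncaptured optimal directions into a single largest correlation loses $\sqrt{k-m_{t}}$, and a naive accounting does not recover this within $2k$ steps. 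Closing the gap is where the bound $\tfrac{1}{20k}$ must be used more sharply --- not merely to keep the local frames conditioned but to certify that the $2k$ atoms OLS selects stay inside some $T$ with $|T|\le 2k$ and $\|y-P_{T}y\|\le\|r^{*}\|$, so that $r_{2k}$ may be compared against $\min_{|T|\le 2k}\|y-P_{T}y\|$ directly instead of against the $k$-atom optimum through an accumulating loss. I would implement this by an induction on $t$ maintaining, alongside $\|r_{t}\|^{2}$, $\beta_{t}$ and $m_{t}$, such a certificate for $S_{t}$; the remaining work is the somewhat intricate verification that all the constants dovetail to land at exactly $(3,2k)$ under coherence $\tfrac{1}{20k}$.
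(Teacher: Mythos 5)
This theorem is not proven in the paper at all — it is one of five results quoted verbatim from the literature (here, from \cite{Livshitz}) to survey the algorithmic state of the art before the paper's own hardness theorems, so there is no ``paper's own proof'' to compare against.

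Taken on its own terms, your sketch follows the standard skeleton for Lebesgue-type OMP/OLS analyses (split the residual energy into the part $\beta_t$ unreachable even with the optimal atoms and the part $\gamma_t$ still recoverable, use Gershgorin conditioning to get per-step progress and selection bounds, then chain), and you candidly flag the real obstruction yourself. That obstruction is genuine: the progress estimate $\max_{j}|\langle r_t,\Phi_j\rangle|\ge c\sqrt{\gamma_t/(k-m_t)}$ only yields the recurrence $\gamma_{t+1}\le(1-\Theta(1/k))\gamma_t+(\beta_t-\beta_{t+1})$, and after the $\le k$ phase-one steps you enter phase two with $\gamma_t=O(k)\|r^*\|^2$, so $k$ more iterations of a $(1-\Theta(1/k))$ contraction leave $\gamma_{2k}=\Theta(k)\|r^*\|^2$ and hence $\|r_{2k}\|=O(\sqrt{k})\|r^*\|$, not $3\|r^*\|$. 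Your proposed fix — comparing $r_{2k}$ against a $2k$-atom certificate $T$ with $\|y-P_Ty\|\le\|r^*\|$ — does not as stated close the gap either, since $T=\Lambda^*$ already satisfies that property and one is back to the same accounting. So this is a plausible opening but not a proof of the $(3,2k)$ bound; the constant-factor conclusion of Livshitz requires a sharper potential argument than the one you outline, and you would need to consult that source to reproduce it.
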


In this paper, we are particularly interested in an approximation
of the form $(f(k),k)$ which implies a solution to the standard
sparse approximation problem. We will investigate the possibility of
such an approximation with respect to $h(k)$. First, let us
discuss these algorithmic results qualitatively.
First, notice that all the results assume a coherence of
$O(k^{-1})$. Indeed, it is a curious question whether such a
restrictive assumption is needed for approximating the problem.
There is no trivial answer. Another peculiarity is that all the
results are essentially due to OMP (or a slight variant), which is
a simple and intuitive greedy algorithm reminiscent of the greedy
method for the well known Set Cover problem in combinatorial
optimization. This method is optimal due to a result of Feige
\cite{Feige} with respect to approximating the best solution.
In essence, sparse approximation can also be considered as
a covering problem where we want to cover a target vector using
vectors from a given set. Of course, unlike Set Cover which does
not bear any contextual information on the elements, one also needs
to take the linear algebraic content into account. Hence, thinking
in purely analogical manner, one would expect that OMP is
probably the best algorithm one can hope for under the definition
of approximation we have provided and the assumption of $\mu =
O(k^{-1})$ is most likely necessary. Although our results are
not exactly tight and there is still some room for improvement
(algorithmic and/or complexity theoretic), as we will see, this
intuition is correct to a certain extent.

We would like to note that intuitive reasonings about why one
might need coherence have
already been discussed in the literature. The question whether
one needs coherence is explicitly articulated in \cite{Candes}
where it is pointed out that ``for if two columns are
closely correlated, it will be impossible in general to distinguish
whether the energy in the signal comes from one or the other''.
However strong this intuition is, there is no complexity theoretic
barrier for solving the sparse approximation problem under
reasonably coherent dictionaries. There naturally arises the
question of how much coherence makes the problem intractable.
For instance, does there exist a polynomial time algorithm
for sparse approximation under some very small constant or
inverse logarithmic coherence? This is the conceptual issue
we address in this paper. This work can also be seen as a continuation
of our effort to prove hardness results using the standard
PCP tools for problems of linear algebraic nature. The rationale of
the constructions of the current paper are similar to the ones
in our previous results where we prove hardness results for
subset selection problems in matrices \cite{CSSP-UG,Volume}.

%Apart from the theorems cited above, there are two main
%algorithmic results related to our investigation. In
%\cite{Tropp4}, Tropp shows that if the dictionary has coherence
%$\mu = M^{-1/2}$, then there exists a unique sparse
%representation for $k = O(M/\log{M})$, which can found via a
%convex relaxation of the sparse approximation problem.
%Jagadish et al. \cite{Jagadish} provides a dynamic programming
%based algorithm that solves the sparse approximation problem
%optimally for certain types of dictionaries having
%coherence exactly $\mu = 1-1/M$ in time $O(kN^2)$.

%%%%%%%%%%%%%%%%%%%%%%%%%%%%%%%%%%%%%%%%%%%%%%%%%%%%%%%%%%%%%%%%%%%%%%%%
\subsection{Main Results}
We prove the following two theorems, which state the hardness
of \textsc{Sparse} under coherence as a function of $k$
(the number of atoms to be selected):

\begin{thm}
\label{thm_a} For any constant $\epsilon > 0$, and any function $f$, there is
a sparsity parameter $k$ such that there is no polynomial time $(f(k),k)$-approximation
algorithm for \textsc{Sparse} under coherence $\epsilon$ unless \textsf{P} = \textsf{NP}.
\end{thm}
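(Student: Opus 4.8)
The plan is to give, for each fixed constant $\epsilon > 0$, a polynomial-time reduction from an \textsf{NP}-hard language $L$ to \textsc{Sparse} whose output dictionaries $\Phi$ satisfy $\mu(\Phi) \le \epsilon$ and which has \emph{perfect completeness}: on a ``yes'' instance the target $y$ is an \emph{exact} combination of $k$ columns of $\Phi$, so the optimum of (\ref{sparse}) is $0$, whereas on a ``no'' instance every selection of $k$ columns leaves residual at least some fixed $\delta > 0$. Granting such a reduction, the theorem is immediate: a polynomial-time $(f(k),k)$-approximation algorithm would be forced to return $x$ with $\|y - \Phi x\|_2 \le f(k)\cdot 0 = 0$ on ``yes'' instances and $\|y - \Phi x\|_2 \ge \delta > 0$ on ``no'' instances, hence would decide $L$ in polynomial time, forcing \textsf{P} = \textsf{NP}. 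Because completeness is perfect, the argument does not depend on $f$ (a multiplicative factor cannot inflate a zero optimum), so it is enough to exhibit, for the given $\epsilon$, a value of the sparsity parameter $k$ for which the reduction applies — which is exactly what the statement asks. In particular no parallel repetition of the underlying PCP is needed: soundness bounded away from $1$ already yields an unbounded gap on the \textsc{Sparse} objective.

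To force the coherence down to the prescribed constant $\epsilon$, I would start not from a two-prover PCP but from a multilayered PCP with $\ell$ layers and perfect completeness, taking $\ell$ to be a constant depending on $1/\epsilon$. The dictionary has one column for each (layer, legal answer, position), embedded in $\mathbb{R}^M$ so that (i) the inner-product pattern of the columns encodes precisely the verifier's consistency checks across layers, and (ii) any two distinct columns share only a small fraction of their support. The target $y$ is the normalized sum of the columns selected by an accepting, mutually consistent assignment, so that in the ``yes'' case exactly those columns — one per position, $k$ of them in total — reproduce $y$ with residual $0$. Property (ii) is obtained by routing the combinatorial incidence structure through a design / partition system of block-width $w \ge 1/\epsilon$ in which any two blocks meet in at most an $\epsilon$-fraction of coordinates, so that $|\langle \Phi_i, \Phi_j \rangle| \le \epsilon$ holds by construction. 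The value of $k$ quoted in the statement is then determined by the interplay between $\ell$, the design parameters, and the size of the PCP instance.

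For soundness I would argue that on a ``no'' instance no $k$-sparse $x$ — with \emph{arbitrary real} coefficients, and even if it spreads its support unevenly across layers — can push $\|y - \Phi x\|_2$ below a fixed $\delta$. This step must be geometric rather than combinatorial: I would decompose $y$ along the layer structure, use the near-orthogonality from (ii) to show that a chosen column can absorb only a controlled amount of the energy of $y$ unless it belongs to a globally consistent family of answers, and then invoke the soundness of the multilayered PCP — which forbids consistency in more than a $(1-\gamma)$ fraction of the checks — to bound the orthogonal projection of $y$ onto \emph{every} $k$-dimensional coordinate-selected subspace away from $y$, yielding $\delta = \delta(\gamma,\ell) > 0$. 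I expect this to be the main obstacle, and it is where the small-coherence requirement bites: unlike in the Set Cover reduction the adversary works over $\mathbb{R}$ and may exploit sign cancellations and fractional weights, so one cannot merely count uncovered elements; reconciling this with the demand that $\epsilon$ be a \emph{fixed} constant — so that $\ell$, and hence the dilution, must be chosen before the analysis — is precisely why the multilayered PCP has to be designed hand in hand with the embedding. Combining the completeness and soundness claims with the reduction above then proves the theorem.
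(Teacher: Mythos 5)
Your high-level plan is the right one and aligns with the paper's: build a multilayered constraint-satisfaction instance with perfect completeness, embed each vertex--label pair as a column by routing its edge-incidences through a partition/design system so that distinct columns have small support overlap, let $y$ be the all-ones vector, and argue that the residual is zero if and only if the instance is satisfiable. You are also right that perfect completeness kills the factor $f$, and that a lower bound $\delta>0$ on the NO-side residual is not even needed (nonzero suffices, and in fact the covering argument gives $\ge 1$ for free because missing a coordinate means that coordinate of $\Phi x$ is $0$). You can also simplify your soundness worry: the dictionary columns have nonnegative entries and $y$ is all ones, so the adversary gains nothing from signs or fractional coefficients --- if any coordinate of $y$ lies outside the union of supports of the selected columns, that coordinate alone forces $\|y-\Phi x\|_2 \ge 1$. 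The argument is purely combinatorial (coverage), not geometric.

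However, there is a genuine gap in how you claim to obtain coherence $\le\epsilon$. Your property (ii), that any two distinct columns share only a small fraction of their support, is \emph{not} achievable ``by construction'' from the partition/design system alone. The problem is on the projection side of the Label Cover: for a fixed vertex $v$ on the large-alphabet side, the column for label $a$ places in the block of edge $e$ the design vector indexed by $\Pi_e(a)$, not by $a$ itself. If two distinct labels $a_1\neq a_2$ satisfy $\Pi_e(a_1)=\Pi_e(a_2)$ for a large fraction of edges $e$ incident to $v$ (which a generic Label Cover allows, since it maps a big alphabet onto a small one), then $\Phi_{v,a_1}$ and $\Phi_{v,a_2}$ share most of their support and their inner product can be arbitrarily close to $1$, independently of the partition-system parameters. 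Your design system bounds the overlap between \emph{different} design vectors, but these two columns use the \emph{same} design vector on the shared blocks. This is exactly why the paper replaces the plain Label Cover by the \emph{Smooth} Label Cover of Theorem~\ref{smooth}: the smoothness property $\Pr_e[\Pi_e(a_1)=\Pi_e(a_2)]\le 1/T$ bounds the fraction of shared blocks by $1/T$, so the total coherence becomes $1/T + 1/\ell$, and one then takes $T$ (and $\ell$) large. Your remark that ``no parallel repetition is needed'' is consequently misleading: while the soundness parameter need not be driven to zero, the smoothness property that your construction tacitly requires is itself obtained via a parallel-repetition-style construction. Without smoothness (or a move to Unique Label Cover, which the paper reserves for Theorem~\ref{thm_b} precisely because it sacrifices perfect completeness), the coherence bound collapses and the reduction does not prove the theorem.
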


\begin{thm}
\label{thm_b} For any constants $c > 0$ and $\epsilon > 0$, there is
a sparsity parameter $k$ such that there is no polynomial time $(c,k)$-approximation
algorithm for \textsc{Sparse} under coherence
$k^{-1+\epsilon}$ unless Unique Games is in \textsf{P}.
\end{thm}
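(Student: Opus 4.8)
The plan is to reduce from Unique Games, routed through a multilayered unique label-cover instance tailored so that the dictionary produced by the reduction has coherence as small as $k^{-1+\epsilon}$. I would start from the Unique Games Conjecture in its standard form: for every sufficiently small $\delta>0$ one cannot in polynomial time distinguish unique label-cover instances on which some labeling satisfies a $(1-\delta)$-fraction of the constraints from those on which no labeling satisfies more than a $\delta$-fraction (unless Unique Games $\in$ \textsf{P}). From such an instance I would build, by a layering construction, an $\ell$-layered instance with layers $V_1,\dots,V_\ell$, alphabet $[R]$, permutation constraints between every pair of layers, the same completeness-versus-soundness gap on each pair, and, crucially, a \emph{weak-density} property: in the soundness case no collection of vertex-label choices that is small relative to $\ell$ can be made consistent across more than a constant fraction of the layers. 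The parameters $\ell$, $R$, $\delta$ are kept free and fixed only at the end as functions of $c$, $\epsilon$, and the input size $n$.

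Next I would construct $\Phi\in\mathbb{R}^{M\times N}$ and $y\in\mathbb{R}^M$ from the layered instance. The coordinate space splits into $\ell$ mutually orthogonal blocks $B_1,\dots,B_\ell$, one per layer, so that any two atoms coming from different layers are \emph{exactly} orthogonal; inside $B_i$ the coordinates encode the constraints incident to layer $i$, each fattened into $D$ parallel copies in a partition-system-like way, and the atom attached to a vertex-label pair $(v\in V_i,a\in[R])$ is the normalized indicator of the coordinates of $B_i$ switched on by assigning label $a$ to $v$. With this normalization, any two atoms overlap in at most one fattened coordinate class, so their inner product is $\Theta(1/D)$ when their labels agree on a shared incident constraint and $0$ otherwise, and hence $\mu(\Phi)=\Theta(1/D)$; this fattening is the lever for driving the coherence down. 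I would let $y$ be the normalized all-ones vector over the blocks and set the sparsity to $k=\ell$, so that a feasible solution may spend about one atom per layer --- the amount needed to pin down a global labeling.

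For completeness, the labeling satisfying a $(1-\delta)$-fraction of constraints selects $\ell$ atoms whose span, by orthogonality of the blocks, leaves a residual whose square is additive over layers and is $O(\delta)$ per layer, so $\|y-\Phi x^*\|_2\le\rho_{\mathrm{YES}}:=O(\sqrt{\delta\ell})$. For soundness, I would argue that for any $x$ with $\|x\|_0\le k=\ell$ a constant fraction of the blocks each retains $\Omega(1)$ of the residual mass: atoms of layer $i$ affect only $B_i$ and only through the constraints incident to the vertex they encode, so if $x$ recovered more than a tiny fraction of the mass of $y$ on a constant fraction of the blocks, then reading off the selected labels would satisfy more than a $\delta$-fraction of the constraints, contradicting weak density. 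Hence $\|y-\Phi x\|_2\ge\rho_{\mathrm{NO}}:=\Omega(\sqrt{\ell})$.

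To finish, I would note that the residual ratio is $\rho_{\mathrm{NO}}/\rho_{\mathrm{YES}}=\Omega(\delta^{-1/2})$, which exceeds the prescribed constant $c$ as soon as $\delta$ is a small enough constant depending on $c$ --- and shrinking $\delta$ only strengthens the hard label-cover instance, so this costs nothing. Since $\mu(\Phi)=\Theta(1/D)$, choosing $D\ge\ell^{1-\epsilon}=k^{1-\epsilon}$ yields $\mu(\Phi)\le k^{-1+\epsilon}$ at the price of a factor-$D$ blow-up in $M$ and $N$, which stays polynomial provided $\ell=k$ is polynomially bounded in $n$; I would then take $k=\ell$ to grow with $n$ (the layering must support this while keeping the alphabet polynomially bounded), large enough that $k^{-1+\epsilon}$ is genuinely sub-constant so the statement is not already implied by Theorem~\ref{thm_a}. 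A polynomial-time $(c,k)$-approximation algorithm for \textsc{Sparse} under coherence $k^{-1+\epsilon}$ would then separate the completeness and soundness cases, deciding Unique Games in polynomial time. I expect the hard part to be the simultaneous control of three quantities that pull against one another: the coherence must fall all the way to $k^{-1+\epsilon}$, which forces a thin overlap pattern (large $D$, large dimension, $\ell=k$ large); the completeness-to-soundness residual gap must nevertheless stay a constant strictly above $c$, which wants a thick overlap so inconsistent choices are heavily penalized; and the sparsity budget must be \emph{exactly} $k=\ell$ with no slack, which is precisely what the weak-density property has to buy. Designing a multilayered PCP whose soundness analysis survives the $1/\sqrt{D}$ normalization while the per-pair inner products drop to $k^{-1+\epsilon}$ --- instead of quoting an existing PCP --- is, I expect, where the real work lies, consistent with the paper's promise of a ``new simple multilayered PCP.''
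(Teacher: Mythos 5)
Your high-level strategy---reduce from Unique Games through a multilayered unique label cover with a partition-system-like incoherent vector construction, and drive the coherence down by letting the number of layers $\ell$ grow polynomially relative to the other size parameters---matches the paper's. But two of your structural choices would break the reduction. First, you make the blocks $B_1,\dots,B_\ell$ mutually orthogonal, one per layer, with atoms of layer $i$ confined to $B_i$. This decouples the layers: nothing in $\Phi$ then penalizes choosing labels for layers $i$ and $i+1$ that violate the permutation constraint between them, so the Unique Games structure is simply never tested. In the paper the blocks are indexed by \emph{hyper-edges}, and all $\ell$ atoms chosen for the vertices of a hyper-edge $e$ write into the same block; the incoherent vector system $V(\ell,R)$ is built so that one vector from each of the $\ell$ positions within a single ``level'' of the system forms an exact partition of the block, which happens precisely when $e$ is strongly satisfied, while a violated constraint forces an overlap and leaves $\Omega(1)$ of the block uncovered. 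That cross-layer sharing of coordinates is the soundness mechanism, not an obstacle to be engineered away with orthogonality.

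Second, $k=\ell$ (one atom per layer) cannot even make the completeness case close: each layer contains $\mathrm{poly}(n)$ vertices and $y$ is an all-ones vector with mass attributed to every vertex and hyper-edge, so covering $y$ requires one atom per \emph{vertex}. The paper sets $k=\sum_{i=1}^\ell |X_i|=\tfrac{\ell}{2}(|V|+|W|)$ and then takes $\ell=n^{C}$ with $C$ so large that $|X_i|,|E^\ell|\le n^{C\epsilon}$, giving $k\le \ell^{1+\epsilon}$ and hence $1/\ell\le k^{-1+\epsilon'}$; the coherence is $1/\ell$ because $\ell$ is already a dimension of the incoherent vector system, so no separate fattening parameter $D$ is needed or available. (Related sanity check: with $y$ normalized, $\rho_{\mathrm{NO}}=\Omega(\sqrt{\ell})$ is impossible since $\|y-\Phi x\|_2\le\|y\|_2=1$ at $x=0$; the paper uses the unnormalized all-ones $y$ and gets bounds $\sqrt{\epsilon M}$ versus $\sqrt{(1/4-\Theta(\delta))\,M}$ scaling with the dimension $M$.) Finally, the ``weak density'' property your soundness argument leans on is left undefined and would have to rule out adversarial allocations that concentrate many atoms on few vertices; the paper instead gives an explicit exchange argument showing the canonical one-atom-per-vertex assignment is never dominated and then computes the uncovered fraction block by block from the $1/\ell^2$ pairwise overlaps of the incoherent vector system. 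These are the steps where, as you anticipated, the real work lies.
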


Both theorems are the result of the same reduction with
different PCPs. Note that the first theorem rules out all
sorts of approximation. This is due to the fact that we
have perfect completeness in the PCP which implies
$\|y-\Phi x\|=0$ in the YES case of our reduction. Our
construction ensures that this value is strictly greater
than $0$ for the NO case implying that there is no
approximation at all under the assumption \textsf{P} $\neq$ \textsf{NP}.
As for the second theorem, Unique Games is
a problem proposed and conjectured to be \textsf{NP}-hard by Khot
\cite{Khot-UGC}, the so called Unique Games Conjecture
(UGC). Even though the assumption of the theorem is known
to be stronger than the famous \textsf{P} $\neq$ \textsf{NP}, current
algorithmic techniques fall short of proving Unique Games
to be in \textsf{P}. In fact, there are many strong inapproximability
results assuming UGC. A famous example is the Max-Cut problem for
which a tight hardness result was proven by Khot et al.
\cite{Khot-Maxcut}. Since the PCP implied by the UGC
does not have perfect completeness, it will be the case
that $\|y-\Phi x\|=\epsilon$ for some small positive
$\epsilon$ in the YES case of our reduction. In the NO
case, we will have some explicit constant value for
$\|y-\Phi x\|$, thereby ruling out a constant factor
approximation only, unlike the strong result in the
first theorem.

We would like to underline that our reductions do not imply
any hardness results for compressed sensing. The dictionaries
we construct are very special deterministic matrices which
do not seem to be of any use in this field. We only make
sure that the column vectors of the dictionary span the
whole space which is the standard assumption in sparse
approximation. In fact, it is an interesting challenge to
come up with complexity theoretic barriers for compressed
sensing, say parameterized with the restricted isometry
constant. Such investigations however, are beyond the scope of this
paper.
%%%%%%%%%%%%%%%%%%%%%%%%%%%%%%%%%%%%%%%%%%%%%%%%%%%%%%%%%%%%%%%%%%%%%%%%
%%%%%%%%%%%%%%%%%%%%%%%%%%%%%%%%%%%%%%%%%%%%%%%%%%%%%%%%%%%%%%%%%%%%%%%%
%%%%%%%%%%%%%%%%%%%%%%%%%%%%%%%%%%%%%%%%%%%%%%%%%%%%%%%%%%%%%%%%%%%%%%%%
\section{Motivation: The Two Layered PCP}
As is usual in most hardness results, our starting point
is the PCP theorem \cite{Pcp,Pcp2} combined with Raz's parallel
repetition theorem \cite{Raz}. In this section, we define the
constraint satisfaction problem implied by these two theorems
and prove a preliminary result showing the hardness of sparse
approximation under coherence $1/2+\epsilon$. The constraint
satisfaction problem that we will make use of should satisfy
an extra property called ``smoothness'', which is usually not
implied by the PCP theorem together with Raz's parallel
repetition theorem. However, we find it convenient to state
the usual construction first, and then point out its
deficiencies and why we need smoothness. This will also allow
the reader to follow the main reasoning of our actual reduction
and why the Unique Games Conjecture is involved in the second
theorem. We then prove our main theorems using multilayered
PCPs in the next two sections.

The constraint satisfaction problem of our interest is often
called the Label Cover problem, first introduced in
\cite{LC}. A Label Cover instance $L$ is defined as
follows: $L = (G(V,W,E),\Sigma_V, \Sigma_W,\Pi)$ where

\begin{itemize}
\item $G(V,W,E)$ is a regular bipartite graph with vertex sets $V$
and $W$, and the edge set $E$. \vspace{1mm}

\item $\Sigma_V$ and $\Sigma_W$ are the label sets associated with
$V$ and $W$, respectively. \vspace{1mm}

\item $\Pi$ is the collection of constraints on the edge set,
where the constraint on an edge $e$ is defined as a function
$\Pi_e\colon \Sigma_V \rightarrow \Sigma_W$.
\end{itemize}

The problem is to satisfy as many constraints
as possible by finding an assignment $A\colon V \rightarrow \Sigma_V$,
$A\colon W \rightarrow \Sigma_W$. A constraint $\Pi_e$ is said
to be satisfied if $\Pi_e(A(v))=A(w)$ where $e=(v,w)$.
Starting from a MAX3-SAT instance
in which each variable occurs exactly $5$ times, one can define
a (fairly well known) reduction to the Label Cover problem.
Applying what is called a parallel repetition to the instance
at hand, one can then get a new Label Cover instance $L$
for which $|V|=(5n/3)^u$, $|W| = n^u$, $|E|=(5n)^u$,
$|\Sigma_V|=7^u$, $|\Sigma_W|=2^u$, the degree of the vertices
in $V$ is $d_V=3^u$, and the degree of the vertices in $W$ is
$d_W=5^u$, where $n$ is the number of variables in the MAX3-SAT
instance and $u$ is the number of parallel repetitions. The
following is a standard result.

\begin{thm} (PCP theorem \cite{Pcp,Pcp2} and Raz's parallel repetition theorem \cite{Raz})
\label{Raz} Let $L$ be given as above. There exists a universal
constant $\gamma > 0$ such that for every large enough constant
$u$, it is \textsf{NP}-hard to distinguish between the following two cases:

\begin{itemize}
\item YES. There is an assignment $A\colon V \rightarrow \Sigma_V$,
$A\colon W \rightarrow \Sigma_W$ such that $\Pi_e$ is satisfied for
all $e \in E$.

\item NO. No assignment can satisfy more than a fraction
$2^{-\gamma u}$ of the constraints in $\Pi$.
\end{itemize}
\end{thm}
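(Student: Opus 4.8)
The plan is to assemble the claimed instance in two stages: first build a basic Label Cover instance with a \emph{constant} soundness gap directly from the PCP theorem, and then amplify that gap by taking the $u$-fold parallel repetition and invoking Raz's theorem.

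For the first stage, I would start from the PCP theorem in its \textsc{Gap-3SAT} form: there is a universal constant $\epsilon_0>0$ such that it is \textsf{NP}-hard to distinguish a satisfiable $3$CNF formula from one in which every assignment fails at least an $\epsilon_0$ fraction of the clauses. A standard expander-based regularization (replace each occurrence of a variable by a fresh copy, enforce consistency along the edges of a constant-degree expander, and pad with dummy clauses) turns this into the same gap problem on instances of \textsc{MAX3SAT} in which every variable occurs in exactly $5$ clauses, at the cost of replacing $\epsilon_0$ by another universal constant $\epsilon_1>0$. From such a formula $\varphi$ on $n$ variables with $5n/3$ clauses, the basic Label Cover instance $L_0$ is the usual clause--variable game: $V$ is the set of clauses, $W$ the set of variables, $E$ contains $(C,x)$ whenever $x$ appears in $C$; the label set $\Sigma_V$ is the set of the $7$ satisfying assignments of a $3$-clause, $\Sigma_W=\{0,1\}$, and $\Pi_{(C,x)}$ maps a clause-assignment to the bit it assigns to $x$. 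This graph is bipartite and biregular with $d_V=3$ and $d_W=5$, matching the pre-repetition parameters. Completeness is immediate: a satisfying assignment of $\varphi$ induces a labeling satisfying every edge. For soundness, given any labeling that satisfies a $(1-\delta)$ fraction of the edges, at most a $3\delta$ fraction of the clauses are incident to a violated edge, so at least a $1-3\delta$ fraction have all three incident edges satisfied and are therefore satisfied by the variable-labels; hence if $\varphi$ is $\epsilon_1$-far from satisfiable we must have $\delta\ge\epsilon_1/3$, i.e. $\mathrm{val}(L_0)\le 1-\epsilon_1/3$, a universal constant bounded away from $1$.

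For the second stage, let $L=L_0^{\otimes u}$ be the $u$-fold parallel repetition: the vertex sets become $V^u$ and $W^u$, the alphabets $\Sigma_V^u$ and $\Sigma_W^u$, an edge joins $(v_1,\dots,v_u)$ and $(w_1,\dots,w_u)$ precisely when $(v_i,w_i)\in E$ for every $i$, and the constraint is the coordinatewise conjunction of the $\Pi_{(v_i,w_i)}$. Tensoring preserves bipartiteness and biregularity and yields exactly the stated cardinalities $|V|=(5n/3)^u$, $|W|=n^u$, $|E|=(5n)^u$, $|\Sigma_V|=7^u$, $|\Sigma_W|=2^u$, $d_V=3^u$, $d_W=5^u$; since $u$ is a constant these are polynomial in $n$, so the construction is a polynomial-time reduction. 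Completeness is again trivial (the product of satisfying labelings). Soundness is exactly Raz's parallel repetition theorem: because $\mathrm{val}(L_0)\le 1-\epsilon_1/3$ and the $V$-prover answers over an alphabet of the constant size $7$, there is a universal constant $\gamma>0$ with $\mathrm{val}(L_0^{\otimes u})\le 2^{-\gamma u}$ for every $u$ (and in particular for every large enough $u$). Combining the two cases with the \textsf{NP}-hardness of \textsc{Gap-3SAT} gives the theorem.

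The genuinely hard content is entirely imported: the PCP theorem for the basic constant gap, and Raz's theorem for the exponential decay under repetition. Given those, the only steps needing care are the regularization — arranging that every variable occurs in \emph{exactly} $5$ clauses while losing only a constant factor in the gap — and the soundness bookkeeping of the first stage, i.e. tracking that the clause--variable projection and the averaging over the three edges of a clause each cost only constant factors. I expect the main obstacle, to the extent there is one at this level, to be making the regularization and the alphabet/degree accounting line up precisely with the stated parameters $5^u,3^u,7^u,2^u$, rather than anything conceptually deep.
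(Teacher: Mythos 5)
The paper does not prove this theorem but states it as a standard fact, citing the PCP theorem and Raz's parallel repetition theorem, with the surrounding text sketching exactly the two-stage construction you describe (MAX3SAT with each variable in exactly $5$ clauses, the clause--variable Label Cover game, then $u$-fold parallel repetition). Your reconstruction is correct, fills in the soundness bookkeeping the paper leaves implicit, and matches the stated parameters $|V|=(5n/3)^u$, $|W|=n^u$, $|E|=(5n)^u$, $|\Sigma_V|=7^u$, $|\Sigma_W|=2^u$, $d_V=3^u$, $d_W=5^u$; this is the same approach the paper is relying on via its citations.
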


The idea of our reduction is simple. The vector $y$ we want
to approximate is composed of all $1$s. In the YES case,
we want to ensure that there are $k$ column vectors in $\Phi$
whose linear combinations can ``cover'' all the coordinates
of $y$. In the NO case, we want to have that there are no
$k$ column vectors that can cover all the coordinates.
Hence, consider the following reduction from Label Cover to
\textsc{Sparse} inspired by a well known reduction proving
hardness for the Set Cover problem \cite{Lund-Yannakakis}:
We first define a set of vectors $x_1, x_2, \hdots, x_{2^u}$
consisting of only $0$s and $1$s, and then normalize them.
The set also satisfies the property that the dot product
of any two vectors in it is $1/2$. It is not difficult to
construct such a vector set; the rows of a Hadamard matrix
of size $2^{u+1}$ readily provides one. Specifically, in all
reductions we present from now on, we consider a Hadamard matrix
of size $2^{x+1}$ where we need at most $2^x$ distinct codewords
covering half of the coordinates. This is by the fact that one
of the rows of the Hadamard matrix is all $1$s and should
be discarded. Thus, the remaining $2^{x+1}-1>2^x$ rows suffice
for our purpose. Given $L$ as above, we define $2^u$
column vectors for each $w \in W$, one for each label in
$\Sigma_W$. Let the column vectors defined for $w$ be $\Phi_{w,1},
\hdots, \Phi_{w,2^u}$. These column vectors have $(5n)^u$
disjoint blocks, one for each edge in $E$. For each
$e \in E$ which is incident to $w \in W$, the block of
$\Phi_{w,i}$ corresponding to $e$ consists of a specified
vector, say $x_i$. The blocks of $\Phi_{w,i}$ corresponding
to the edges which are not incident to $w$ are all zeros.
Finally, we normalize $\Phi_{w,i}$ so that it has norm
$1$ and set $k = |V|+|W| = (5n/3)^u+n^u$.

The reduction performs a similar operation to the left hand
side of the bipartite graph. The usual practice in covering
type problems is to define on $V$, the complement of the
vectors on $W$ in accordance with the projection function
$\Pi$ of $L$, where taking a complement of a vector in
this context means inverting its $0$s and $1$s. We refer the
reader to Figure~\ref{fig:graph} and Figure~\ref{fig:matrix}
for an illustration of the column vectors defined corresponding
to a simple  Label-Cover instance where the complement of a
vector $x$ is denoted by $\overline{x}$. Here, there are
$|E| = (5n)^u$ blocks in each column vector where each block
is given by a specific Hadamard code.  Thus, the number of
coordinates of a single column vector is $M = 2^{u+1}(5n)^u$.
This construction ensures that in the YES case of $L$ where
there is a labeling
satisfying all the edges, one can prove the existence of
a subset of column vectors complementing each other on
all the coordinates, thereby ``covering'' the target vector $y$.
However, there arises the following typical problem due
to the fact that the label sizes $|\Sigma_V| = 7^u$ and
$|\Sigma_W|=2^u$ are not equal: In order to make sure that
the main idea of the reduction works, one is also inherently
forced to define a total of $2^u$ distinct vectors for a
given $v \in V$, just as we did for the vertices in $W$.
But, there are $7^u$ labels in $\Sigma_V$. This may result
in more than one copies of the same column vector meaning
that the coherence of the dictionary may be very large,
in particular $1$, since there is no
restriction on how the projection function $\Pi$ distributes
$7^u$ labels in $\Sigma_V$ to $2^u$ labels in $\Sigma_W$.
(Note that in this case, we have that the total number of
columns $N = 7^u|V| + 2^u|W| = (35n/3)^u+(2n)^u$).
This is the key point where our reduction deviates from
the usual Set Cover type reductions. In order to ensure
a small coherence, we need a more intricate structure or
a stronger assumption. Accordingly, Theorem \ref{thm_a}
is proved using what is called a Smooth Label Cover instance
and Theorem \ref{thm_b} is proved using the Unique Label Cover
instance which appears in the definition of the Unique Games Conjecture.
To be complete, there is one technicality that must be handled:
The column vectors of $\Phi$ must span $\mathbb{R}^M$. This
can easily be satisfied by adding the identity matrix
of size $M \times M$ to the set of all column vectors.

We now give a theorem stating the existence of a two layered
Smooth Label Cover instance from \cite{SmoothLC}.

\begin{thm} [\cite{SmoothLC}]
\label{smooth}
Given a Label Cover instance
$L = (G(V,W,E),\Sigma_V, \Sigma_W,\Pi)$ with
$|\Sigma_V| = 7^{(T+1)u}$ and $|\Sigma_W|= 2^u 7^{Tu}$,
there is an absolute constant $\gamma > 0$ such that for
all integer parameters $u$ and $T$, it is \textsf{NP}-hard to distinguish
between the following two cases:

\begin{itemize}
\item YES. There is an assignment $A\colon V \rightarrow
\Sigma_V$, $A\colon W \rightarrow \Sigma_W$ such that $\Pi_e$ is
satisfied for all $e \in E$.

\item NO. No assignment can satisfy more than a fraction
$2^{-\gamma u}$ of the constraints in $\Pi$.
\end{itemize}

\noindent Furthermore, $L$ has the following ``smoothness
property''. For every $v \in V$, and $a_1,a_2 \in \Sigma_V$
such that $a_1 \neq a_2$, if $w$ is a randomly chosen
neighbor of $v$, then

$$
Pr_w \left[ \Pi_{(v,w)}(a_1) = \Pi_{(v,w)}(a_2) \right] \leq \frac{1}{T}.
$$

\noindent One can further assume that $|V| = O(n^{Tu})$, $|W| = O(n^{Tu})$
and $G(V,W,E)$ is a regular bipartite graph where the
degree of the vertices in $V$ is $d_V = {(T+1)u \choose u} 3^u$
and the degree of the vertices in $W$ is $d_W = 5^u$.
\end{thm}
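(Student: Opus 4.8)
The plan is to obtain the Smooth Label Cover instance not from a plain parallel repetition of the basic two-prover game but from a ``tuple-and-subset'' lifting of a bounded-occurrence \textsf{MAX3-SAT} instance, and then to verify the four required features --- the \textsf{NP}-hardness gap with perfect completeness, the smoothness inequality, the stated degrees, and the size and regularity claims --- one at a time. I would first invoke the PCP theorem \cite{Pcp,Pcp2} (together with a bounded-occurrence variant, following Feige \cite{Feige}) to fix a $3$-CNF formula $\phi$ on $n$ variables in which every variable appears in exactly $5$ clauses, so that there are $m = 5n/3$ clauses, and such that it is \textsf{NP}-hard to tell whether $\phi$ is satisfiable or at most $(1-\epsilon_0)$-satisfiable for some absolute $\epsilon_0>0$. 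The associated two-prover game has a clause prover holding one of the $7$ satisfying assignments of a clause, a variable prover holding a bit, and the projection that maps a satisfying assignment to the value it assigns a designated variable of the clause.

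For the construction I would let a $V$-vertex be an ordered $(T+1)u$-tuple of clauses $(C_1,\dots,C_{(T+1)u})$ with label set the product of the corresponding $7$-element satisfying-assignment sets, giving $|\Sigma_V| = 7^{(T+1)u}$. A $W$-vertex is produced from such a tuple by picking an index set $S\subseteq [(T+1)u]$ with $|S|=u$ and, for each $i\in S$, one of the three variables $y_i$ of $C_i$; the $W$-vertex records $S$, the variables $(y_i)_{i\in S}$, and the clauses $(C_j)_{j\notin S}$, and its label consists of a bit for each of the $u$ chosen variables together with a satisfying assignment for each of the $Tu$ recorded clauses, so $|\Sigma_W| = 2^u 7^{Tu}$. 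Edges join a $V$-tuple to every $W$-vertex derivable from it; counting the choices of $S$ and of one variable per clause in $S$ gives $d_V = \binom{(T+1)u}{u} 3^u$, while a $W$-vertex fixes $S$ and the $Tu$ non-$S$ clauses, so completing it to a $V$-tuple amounts to choosing one of the $5$ clauses containing each $y_i$, giving $d_W = 5^u$; the graph is regular on each side. The projection $\Pi_{(v,w)}$ sends a tuple of satisfying assignments to the bit it induces on each $y_i$ for $i\in S$ and to its restriction on each $C_j$ for $j\notin S$, a well-defined map $\Sigma_V\to\Sigma_W$.

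Completeness is immediate: a satisfying assignment of $\phi$ induces the consistent labeling that tags each $V$-tuple with the (satisfying) restrictions of the assignment to its clauses and each $W$-vertex with the matching bits and restrictions, so every edge constraint holds. For soundness I would argue that any labeling of the smooth instance satisfying a $\rho$ fraction of its edges yields a strategy of value at least $\rho$ in the $u$-fold parallel repetition of the basic game --- the coordinates in $S$ carry genuine clause--variable checks and the $Tu$ remaining coordinates are handled by mere copying --- so Raz's parallel repetition theorem \cite{Raz} forces $\rho \le 2^{-\gamma u}$ for an absolute $\gamma>0$. I expect this reduction to parallel repetition to be the main obstacle: one must couple the random choice of $S$ and of the variables $y_i$ with the query distribution of $u$-fold repetition and show that the ``free'' non-$S$ coordinates, which the $W$-prover merely echoes, genuinely give the provers no extra coordinating power; making this precise, rather than hand-waving that ``dummy coordinates do not help'', is the delicate point.

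Smoothness is then a short calculation: fix $v=(C_1,\dots,C_{(T+1)u})$ and two labels $a_1\neq a_2$, which must disagree on the assignment to some clause $C_{i_0}$; over a uniformly random neighbor $w$ (a random size-$u$ set $S$ and random variables $y_i$), if $i_0\notin S$ then $\Pi_{(v,w)}$ exposes both full assignments to $C_{i_0}$ and the two projections already differ, so $\Pi_{(v,w)}(a_1)=\Pi_{(v,w)}(a_2)$ can hold only when $i_0\in S$, an event of probability $u/((T+1)u) = 1/(T+1) \le 1/T$. Finally, $|V|$ and $|W|$ are bounded by the number of relevant tuples of the $m=O(n)$ clauses, hence $n^{O(1)}$ for fixed $u$ and $T$, and regularity of $G$ holds by construction; combining the base-instance gap with these properties yields the statement.
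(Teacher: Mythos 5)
The paper does not prove this theorem; it is quoted verbatim as a citation to \cite{SmoothLC}, so there is no internal proof to compare against. Your reconstruction is the standard tuple-and-subset lifting of the bounded-occurrence 3-SAT clause--variable game, and your bookkeeping is right: the label sizes $7^{(T+1)u}$ and $2^u 7^{Tu}$, the degrees $d_V=\binom{(T+1)u}{u}3^u$ and $d_W=5^u$, regularity, polynomial size for fixed $u,T$, and the smoothness bound $P[i_0\in S]=1/(T+1)\le 1/T$ all check out.

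The step you flag as the obstacle --- soundness --- is indeed the technical heart, but it closes cleanly and you should not leave it as a worry. Let the two provers of the $u$-fold repeated clause--variable game share as public randomness an injection $\sigma:[u]\to[(T+1)u]$ and $Tu$ ``filler'' clauses $(D_j)_{j\notin\sigma([u])}$. Prover 1, on queries $C_1,\dots,C_u$, embeds $C_i$ at position $\sigma(i)$ and fills the rest with the $D_j$, runs the smooth $V$-strategy, and reports the assignments at $\sigma([u])$. Prover 2, on variables $y_1,\dots,y_u$, builds the $W$-vertex with $S=\sigma([u])$, the $y_i$ at $\sigma(i)$, and the same filler clauses, runs the smooth $W$-strategy, and reports the $u$ bits. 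Under this coupling the induced distribution on $(v,w)$ is exactly the smooth edge distribution, and the $u$-fold check (bits agree on $S$) is a sub-event of the smooth edge test (bits agree on $S$ \emph{and} assignments agree off $S$). Hence the value of the smooth game is at most the value of the $u$-fold game with shared randomness, which by convexity equals the deterministic $u$-fold value, which is at most $2^{-\gamma u}$ by Raz. With that filled in, your sketch is a correct account of the proof in \cite{SmoothLC}.

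One cosmetic remark: with your construction $|V|=m^{(T+1)u}=O(n^{(T+1)u})$ rather than $O(n^{Tu})$ as the paper writes; since $u,T$ are constants this is still polynomial and the discrepancy is harmless, but it is a minor imprecision in the quoted bound, not in your proof.
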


Given the Smooth Label Cover instance whose existence is
guaranteed by the theorem above, our reduction is similar
to the one we described for the Label Cover instance. The
vector $y$ we want to approximate is an all $1$s vector.
In order to define $\Phi$, let $x_1, x_2, \hdots, x_{2^u 7^{Tu}}$
be a set of vectors consisting of only  $0$s and $1$s.
Perform a normalization on these vectors, i.e.
$x_i \leftarrow x_i/\|x_i\|_2$. Note that a coordinate of
$x_i$ is either $0$ or a constant depending on $u$ and $T$,
say $\eta$. We also have that the dot product of any two
vectors in this set is $1/2$. As noted earlier, the rows
of a Hadamard matrix of size $2^{u+1} 7^{Tu}$ (which is
a constant) satisfy this property. Given $L$ as above, we
define $2^u 7^{Tu}$ column vectors for each $w \in W$, one
for each label in $\Sigma_W$. Let the column vectors
defined for $w \in W$ be $\Phi_{w,1}, \hdots, \Phi_{w,2^u 7^{Tu}}$.
These column vectors have $|E| = O(n^{Tu})$ disjoint blocks,
one for each edge in $E$. For each $e \in E$ which is
incident to $w \in W$, the block of $\Phi_{w,i}$
corresponding to $e$ consists of a specified vector $x_i$.
The blocks of $\Phi_{w,i}$ corresponding to
the edges which are not incident to $w$ are all zeros.
Finally, we normalize $\Phi_{w,i}$ so that it has norm $1$.

Similar to $W$, we define a column vector for each
vertex-label pair on $V$. Specifically, let $\overline{x_i}$
be the vector such that its $j$th coordinate
$\overline{x_{ij}} = 0$ if the $j$th coordinate $x_{ij}$
of $x_i$ is $\eta$, and $\overline{x_{ij}} = \eta$ if
$x_{ij} = 0$. Let the column vectors defined for
$v \in V$ be $\Phi_{v,1}, \hdots, \Phi_{v,7^{(T+1)u}}$.
These column vectors have $|E| = O(n^{Tu})$ disjoint blocks,
one for each edge in $E$. For each $e \in E$ which is
incident to $v \in V$, the block of $\Phi_{v,i}$ corresponding
to $e$ is the vector $\overline{x_{\Pi_e(i)}}$. The blocks
of $\Phi_{v,i}$ corresponding to the edges which are not
incident to $v$ are all zeros. We normalize
$\Phi_{v,i}$ so that it has norm $1$ and set $k=|V|+|W| = O(n^{Tu})$.
We finally add an identity matrix of size $M \times M$ to $\Phi$
in order to make sure that its column vectors span $\mathbb{R}^M$.
As for the size of the reduction where we have
$\Phi \in \mathbb{R}^{M \times N}$ and $y \in \mathbb{R}^M$,
it is clear that $M = 2^{u+1}7^{Tu}|E| = 2^{u+1}7^{Tu}5^u|W|$ and
$N = 7^{(T+1)u}|V|+2^u7^{Tu}|W|$, which are
both of polynomial size for $T$ and $u$ constant since
$|V|$ and $|W|$ are both $O(n^{Tu})$. The mechanics of the
reduction is best explained pictorially. For this reason,
In Figure~\ref{fig:graph}, we show a small part of a Smooth
Label Cover instance. The corresponding matrix $\Phi^T$ is
given in Figure~\ref{fig:matrix}.

We now prove the completeness
of the reduction. Suppose that there is an assignment of the
vertices in $V$ and $W$ which satisfies all the edges in $E$.
Given such an assignment, take the corresponding column vectors
of $\Phi$, a total of $k=|V|+|W|$ vectors. By the reduction,
given a block corresponding to a specific edge $e = (v,w)$, we
have that the column vectors selected from $v$ and $w$ cover
all the coordinates allocated to $e$, i.e. the nonzero values
in $v$ and $w$ complement each other. Thus, the selected $k$
column vectors cover all the $M$ coordinates and there
exists a vector $x \in \mathbb{R}^N$ with $k$ nonzero entries
satisfying $\Phi x = y$.

Suppose now that there is no assignment of the vertices that
satisfies all the edges. First, note that in order to cover
all the $M$ coordinates, one needs to select exactly $1$
column vector from each vertex in $V$ and $W$. Because, if
there is a vertex $v$ from which no column vector is selected,
one cannot cover the blocks of edges incident to $v$ by
selecting one vector from the neighbors of $v$ (The rows of
a Hadamard matrix cover only half of the coordinates). Hence,
one needs to select multiple vectors to cover these blocks
and it follows that all the coordinates cannot be covered by
selecting $k$ column vectors. But, even if one selects $1$
vector from each vertex, the unsatisfied edges cannot be
covered by the definition of the reduction. Hence, for all
$x \in \mathbb{R}^N$, we have $\|y - \Phi x\|_2 > 0$. This
shows that \textsc{Sparse} cannot be approximated at all given
the conditions of $\Phi$ provided by our reduction.

It remains to see what the coherence of $\Phi$ is. First,
note that the dot product of any two column vectors belonging
to different vertices in $V$ or $W$ is $0$ since they do not
share any blocks. Besides, the dot product between a column
vector in $V$ and another column vector in $W$ is clearly
(much) less than $1/2$ since they share at most one block.
There remain two basic cases to check. The column
vectors defined for a vertex $w \in W$ have half of their
coordinates overlap with each other and hence all the pairs have
dot product $1/2$. The situation is a little more involved
for a vertex $v \in V$. Given $\Phi_{v,i}$ and $\Phi_{v,j}$,
by the smoothness property of the Label Cover instance, we have
that $\overline{x_{\Pi_e(i)}} = \overline{x_{\Pi_e(j)}}$ for
at most $1/T$ fraction of the edges that are incident to $v$.
Thus, the extra contribution we have for the dot product of
two column vectors defined for $v$, because of having the
same entries on some of the blocks, is at most $1/T$ (Note that,
this is not satisfied by the usual Label Cover instance).
It follows that the maximum dot product of any two column vectors
of $\Phi$ is at most $1/2+1/T$. Hence,
\textsc{Sparse} is \textsf{NP}-hard under dictionaries with coherence
$1/2+\epsilon$ for arbitrarily small constant $\epsilon$ upon
selecting $T$ arbitrarily large.

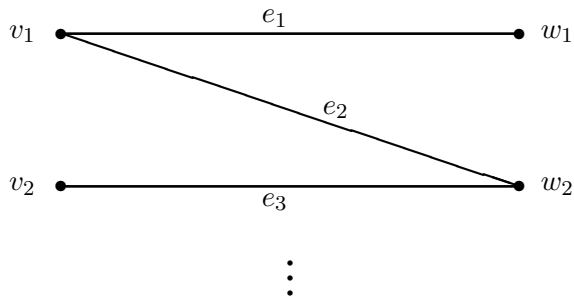
\begin{figure}[h]
\centering
\setlength{\unitlength}{.4in}
\begin{picture}(10,5)(0,0)
\linethickness{1pt} \put(2,4){\circle*{0.15}}
\put(2,4){\line(1,0){6}} \put(2,2){\circle*{0.15}}
\put(2,2){\line(1,0){6}} \thicklines \put(2,4){\line(3,-1){6}}
\put(8,4){\circle*{0.15}} \put(8,2){\circle*{0.15}}
\put(1.5,4){\makebox(0,0){$v_1$}}
\put(1.5,2){\makebox(0,0){$v_2$}}
\put(8.5,4){\makebox(0,0){$w_1$}}
\put(8.5,2){\makebox(0,0){$w_2$}}
\put(4.8,4.2){\makebox(0,0){$e_1$}}
\put(4.8,1.8){\makebox(0,0){$e_3$}}
\put(5.6,3){\makebox(0,0){$e_2$}}
\put(5,1){\circle*{0.08}}
\put(5,0.8){\circle*{0.08}} \put(5,0.6){\circle*{0.08}}
\end{picture}
\caption{A part of a simple bipartite graph representing a
Smooth Label Cover instance} \label{fig:graph}
\end{figure}

\begin{figure}[h]
\centering
\setlength{\unitlength}{.25in}
\begin{picture}(15,15)(0,0)
\linethickness{0.5pt} \put(1.5,13){\line(1,0){12.5}}
\put(1.5,12){\line(1,0){12.5}} \put(1.5,11){\line(1,0){12.5}}
\put(1.5,9){\line(1,0){12.5}} \put(1.5,8){\line(1,0){12.5}}
\put(1.5,6){\line(1,0){12.5}} \put(1.5,5){\line(1,0){12.5}}
\put(1.5,3){\line(1,0){12.5}} \put(1.5,2){\line(1,0){12.5}}

\put(1.5,13){\line(0,-1){2}} \put(1.5,9){\line(0,-1){1}}
\put(1.5,6){\line(0,-1){1}} \put(1.5,3){\line(0,-1){1}}

\put(4,13.5){\line(0,-1){12}} \put(6.5,13.5){\line(0,-1){12}}
\put(9,13.5){\line(0,-1){12}}

\put(14,13){\line(0,-1){2}} \put(14,9){\line(0,-1){1}}
\put(14,6){\line(0,-1){1}} \put(14,3){\line(0,-1){1}}

\put(0.8,12.5){\makebox(0,0){$\Phi_{v_1,1}$}}
\put(0.8,11.5){\makebox(0,0){$\Phi_{v_1,2}$}}
\put(0.8,8.5){\makebox(0,0){$\Phi_{v_2,1}$}}
\put(0.8,5.5){\makebox(0,0){$\Phi_{w_1,1}$}}
\put(0.8,2.5){\makebox(0,0){$\Phi_{w_2,1}$}}

\put(2.7,13.5){\makebox(0,0){$e_1$}}
\put(5.2,13.5){\makebox(0,0){$e_2$}}
\put(7.7,13.5){\makebox(0,0){$e_3$}}

\put(2.7,12.5){\makebox(0,0){$\overline{x_{\Pi_{e_1}(1)}}$}}
\put(5.2,12.5){\makebox(0,0){$\overline{x_{\Pi_{e_2}(1)}}$}}
\put(7.7,12.5){\makebox(0,0){$\overrightarrow{0}$}}
\put(11.5,12.5){\makebox(0,0){$\overrightarrow{0}$}}

\put(2.7,11.5){\makebox(0,0){$\overline{x_{\Pi_{e_1}(2)}}$}}
\put(5.2,11.5){\makebox(0,0){$\overline{x_{\Pi_{e_2}(2)}}$}}
\put(7.7,11.5){\makebox(0,0){$\overrightarrow{0}$}}
\put(11.5,11.5){\makebox(0,0){$\overrightarrow{0}$}}

\put(2.7,8.5){\makebox(0,0){$\overrightarrow{0}$}}
\put(5.2,8.5){\makebox(0,0){$\overrightarrow{0}$}}
\put(7.7,8.5){\makebox(0,0){$\overline{x_{\Pi{{e_3}(1)}}}$}}
\put(11.5,8.5){\makebox(0,0){$\overrightarrow{0}$}}

\put(2.7,5.5){\makebox(0,0){$x_1$}}
\put(5.2,5.5){\makebox(0,0){$\overrightarrow{0}$}}
\put(7.7,5.5){\makebox(0,0){$\overrightarrow{0}$}}
\put(11.5,5.5){\makebox(0,0){$\overrightarrow{0}$}}

\put(2.7,2.5){\makebox(0,0){$\overrightarrow{0}$}}
\put(5.2,2.5){\makebox(0,0){$x_1$}}
\put(7.7,2.5){\makebox(0,0){$x_1$}}
\put(11.5,2.5){\makebox(0,0){$\overrightarrow{0}$}}

\put(2,9.8){\circle*{0.07}} \put(2,10){\circle*{0.07}}
\put(2,10.2){\circle*{0.07}}

\put(2,6.8){\circle*{0.07}} \put(2,7){\circle*{0.07}}
\put(2,7.2){\circle*{0.07}}

\put(2,3.8){\circle*{0.07}} \put(2,4){\circle*{0.07}}
\put(2,4.2){\circle*{0.07}}

\put(11,13.3){\circle*{0.07}} \put(11.3,13.3){\circle*{0.07}}
\put(11.6,13.3){\circle*{0.07}}

\put(11,9.3){\circle*{0.07}} \put(11.3,9.3){\circle*{0.07}}
\put(11.6,9.3){\circle*{0.07}}

\put(11,6.3){\circle*{0.07}} \put(11.3,6.3){\circle*{0.07}}
\put(11.6,6.3){\circle*{0.07}}

\put(11,3.3){\circle*{0.07}} \put(11.3,3.3){\circle*{0.07}}
\put(11.6,3.3){\circle*{0.07}}
\end{picture}

\caption{The resulting (row) vectors in \textsc{Sparse} instance computed
from the graph in Figure \ref{fig:graph} by our reduction}
\label{fig:matrix}
\end{figure}
%%%%%%%%%%%%%%%%%%%%%%%%%%%%%%%%%%%%%%%%%%%%%%%%%%%%%%%%%%%%%%%%%%%%%%%%
%%%%%%%%%%%%%%%%%%%%%%%%%%%%%%%%%%%%%%%%%%%%%%%%%%%%%%%%%%%%%%%%%%%%%%%%
%%%%%%%%%%%%%%%%%%%%%%%%%%%%%%%%%%%%%%%%%%%%%%%%%%%%%%%%%%%%%%%%%%%%%%%%
\section{Reduction from the Multilayered Smooth Label Cover}
\begin{figure}[h]
\begin{subfigure}{.4\textwidth} %.22
  \centering
  \includegraphics[width=.296\linewidth]{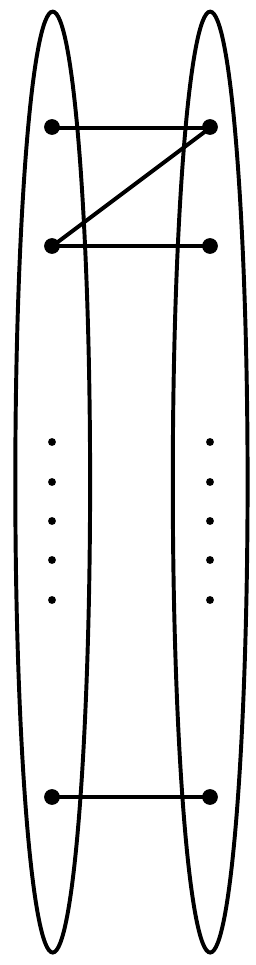} %.265
  \caption{The two layered PCP}
  \label{case1-1}
\end{subfigure}
\begin{subfigure}{.5\textwidth} %.25
  \centering
  \includegraphics[width=\linewidth]{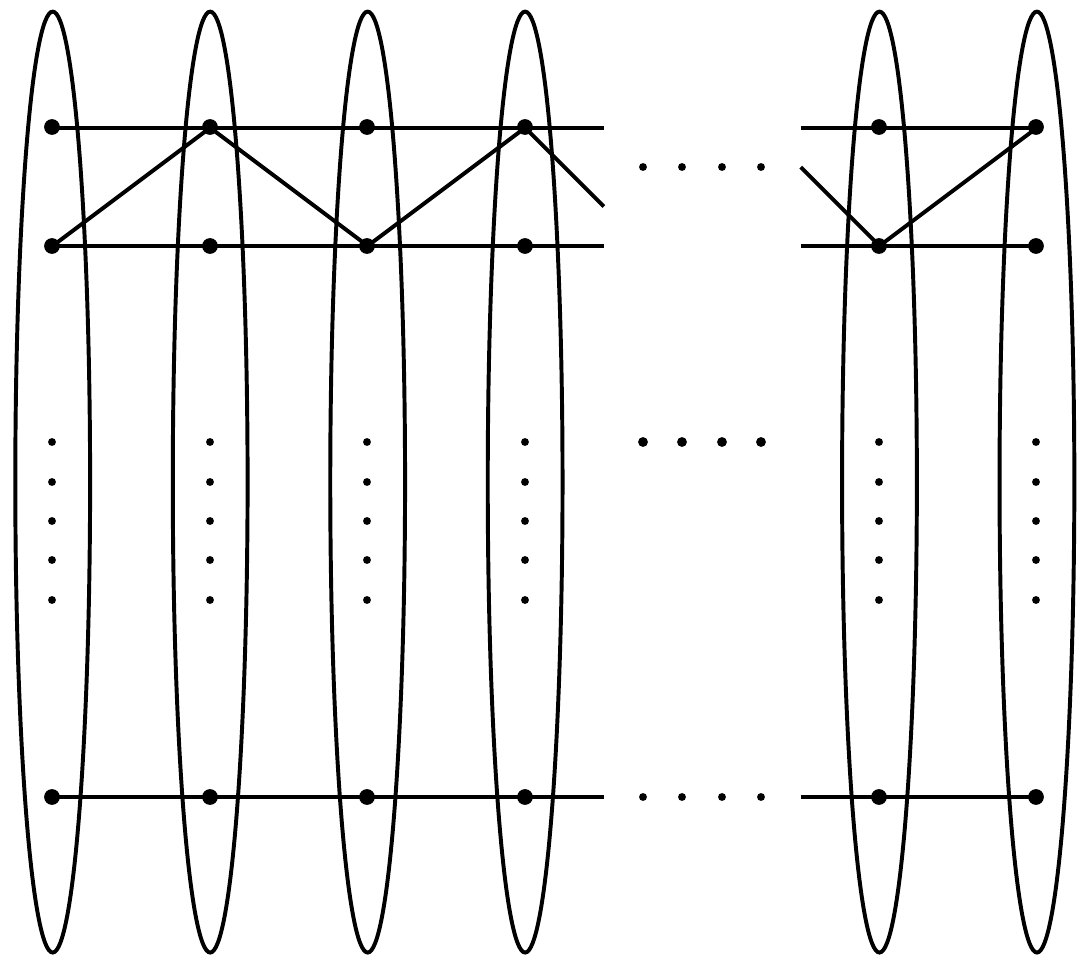}
  \caption{The corresponding multilayered PCP that we make use of}
  \label{case1-2}
\end{subfigure}
\caption{}
\label{fig:multilayered}
\end{figure}

In this section, we define a simple multilayered PCP starting
from a two layered Smooth Label Cover instance. Let
$L = (G(V,W,E),\Sigma_V, \Sigma_W,\Pi)$ be such a
Label Cover instance defined as in the previous section.
An $\ell$-layered Smooth Label Cover instance $L^{\ell}$ is
defined as follows: $L^{\ell} = (G^{\ell}(X_1, \hdots, X_{\ell},E^{\ell}),
\Sigma_V, \Sigma_W, \Pi^{\ell})$ where $\ell$ is an even
positive integer and

\begin{itemize}
\item $G^{\ell}(X_1, \hdots, X_{\ell},E^{\ell})$ is a multipartite
hyper-graph with vertex sets $X_1, X_2, \hdots, X_{\ell}$ and the
hyper-edge set $E$. \vspace{1mm}

\item $X_i = V$ for odd $i$, and $X_i = W$ for even $i$, where
$1 \leq i \leq \ell$. \vspace{1mm}

\item $E^{\ell}$ consists of all hyper-edges of the form
$(v,w,v,w,\hdots,v,w)$ on $\ell$ vertices belonging to $X_1, \hdots, X_{\ell}$,
respectively where $v \in V$, $w \in W$ and $(v,w) \in E$.

\item $\Pi^{\ell}$ is the collection of constraints on the
hyper-edge set, where the constraint on a hyper-edge
$e=(x_1,x_2,\hdots,x_{\ell-1},x_{\ell})=(v,w,\hdots,v,w)$ is itself
defined as a collection of $\ell-1$ constraints of the form
$\Pi_{(x_{2i-1},x_{2i})}\colon \Sigma_V \rightarrow \Sigma_W$ for $i=1,\hdots,\ell/2$, and
$\Pi_{(x_{2i+1},x_{2i})}\colon \Sigma_V \rightarrow \Sigma_W$ for $i=1,\hdots,\ell/2-1$, where
$x_i \in X_i$ for $i=1,\hdots,\ell$, and all the constraints are
identical to $\Pi_{(v,w)}$.
\end{itemize}

As usual, one is asked to find an assignment for all the
vertices of the instance, i.e. construct functions
$A_{2i-1}\colon X_i \rightarrow \Sigma_V$, and
$A_{2i}\colon X_i \rightarrow \Sigma_W$ for $i=1, \hdots, \ell/2$,
where the goal is to satisfy as many hyper-edges as possible.
In this case however, there are two definitions of
satisfiability. A constraint $\Pi_e$ is said to be
\emph{strongly satisfied} if all of the $\ell-1$ constraints
defined for $e$ are satisfied. Otherwise, it is said to be
\emph{weakly satisfied} if only a fraction of its constraints
are satisfied. This is the standard terminology introduced in
Feige's paper about the Set Cover problem \cite{Feige} and
the difference between strong and weak satisfiability is
important in that context. However, we will be only arguing
about the strong satisfiability in this work as it is sufficient
to derive our hardness results.

The foregoing definition already defines a fairly
straightforward reduction from the Smooth Label Cover instance
to the multilayered instance: Just replace each layer
alternatively by $V$ and $W$, define a hyper-edge in
$e \in E^{\ell}$ for each edge $e'=(v,w) \in E$ such that it
contains all the copies of $v$ and $w$ in alternating layers
(i.e. $e'$ uniquely defines $e$), and finally define all the
$\ell-1$ projection functions of a hyper-edge $e$ to be the
projection function $\Pi_{e'}$ on the corresponding edge $e'$
of the original instance. The reduction is shown in
Figure~\ref{fig:multilayered} where each hyper-edge of the
multilayered instance is defined on exactly $\ell$ vertices,
three of them straight, one zigzagging. Given this reduction,
it is not difficult to see that if there is an assignment for
$L$ which satisfies all the constraints, then the same assignment
\emph{strongly} satisfies all the constraints of $L^{\ell}$.
Similarly, if no assignment satisfies more than a fraction
$2^{-\gamma u}$ of the constraints in $L$, then no assignment
\emph{strongly} satisfies more than a fraction $2^{-\gamma u}$
of the constraints in $L^{\ell}$.

The rows of a Hadamard matrix was sufficient to give a set
of vectors with pairwise dot products $1/2$. In order to get
a coherence of arbitrarily small constant value starting
from the $\ell$-layered instance defined above, we need a
more general construction. The construction which we describe
below is essentially inspired by that of partition systems
mentioned by Feige \cite{Feige}. However, it is not clear
whether these systems satisfy conditions on the coherence
that we require. They are tailored to prove hardness for Set
Cover. Hence, we describe our construction from the first
principles. This will also allow the reader to see why the
construction works at an intuitive level. To this aim, we
define an \emph{incoherent vector system} $V(\ell,d)$. It
has the following properties:

\begin{enumerate}
\item It is a set $V(\ell,d)$ of $\ell d$ normalized vectors of
dimension $\ell^d$ with exactly $\ell^{d-1}$ nonzero coordinates of the
same positive value.

\item $V(\ell,d) = \bigcup_{i=1}^d V(i,\ell,d)$ where
$V(i,\ell,d)$ is a set of $\ell$ vectors for $i = 1, \hdots, d$,
and $V(i,\ell,d) \cap V(j,\ell,d) = \emptyset$ for $i \neq j$.

\item Any two distinct vectors in $V(i,\ell,d)$ have dot product $0$ for $i
= 1, \hdots, d$. Furthermore, the indices of the nonzero
coordinates of the vectors in $V(i,\ell,d)$ cover the set $\{1,
\hdots, \ell^d\}$.

\item For a vector $x \in V(i,\ell,d)$ and $y \in V(j,\ell,d)$,
the dot product of $x$ and $y$ is $1/\ell$, where $i \neq j$.
\end{enumerate}

\noindent For explicitly denoting vectors in the system, we let
$$
V(i,\ell,d) = \{V_1(i,\ell,d), V_2(i,\ell,d), \hdots, V_{\ell}(i,\ell,d)\},
$$

\noindent where the vectors are ordered lexicographically with respect to
the nonzero entries in their coordinates. The result of the
reduction of this section is provided by the following lemma.
\begin{lem}
\label{existence_1} There exists an explicit deterministic
construction of an incoherent vector system $V(\ell, d)$
for all $1 \leq d \leq \ell$.
\end{lem}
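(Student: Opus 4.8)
The plan is to realize the $\ell^d$ coordinates as the product set $[\ell]^d := \{1,\dots,\ell\}^d$ and to take the vectors to be normalized indicators of axis-parallel ``slices.'' Concretely, identify $\mathbb{R}^{\ell^d}$ with the $d$-fold tensor power $(\mathbb{R}^\ell)^{\otimes d}$, let $\mathbf{1} \in \mathbb{R}^\ell$ denote the all-ones vector and $e_1,\dots,e_\ell$ the standard basis of $\mathbb{R}^\ell$, and for every $i \in \{1,\dots,d\}$ and $a \in \{1,\dots,\ell\}$ define
$$
V_a(i,\ell,d) \;=\; \frac{1}{\sqrt{\ell^{\,d-1}}}\; \mathbf{1}^{\otimes (i-1)} \otimes e_a \otimes \mathbf{1}^{\otimes (d-i)} .
$$
Equivalently, $V_a(i,\ell,d)$ is supported on the slice $\{x \in [\ell]^d : x_i = a\}$ and has every nonzero entry equal to $\ell^{-(d-1)/2}$. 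Setting $V(i,\ell,d) = \{V_1(i,\ell,d),\dots,V_\ell(i,\ell,d)\}$ and $V(\ell,d) = \bigcup_{i=1}^{d} V(i,\ell,d)$ gives a completely explicit, deterministic construction, each of whose entries is determined by a single comparison $x_i \stackrel{?}{=} a$.

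It then remains to verify the four defining properties, each of which reduces to elementary counting over $[\ell]^d$, or equivalently to factoring the inner product over the $d$ tensor slots. For Property~1, the slice $\{x_i = a\}$ has exactly $\ell^{d-1}$ points, so each of the $\ell d$ vectors has exactly $\ell^{d-1}$ nonzero coordinates, all equal to the single positive value $\ell^{-(d-1)/2}$, and $\ell^{d-1}\cdot\ell^{-(d-1)} = 1$ gives unit norm; for $\ell \ge 2$ the $\ell d$ vectors are pairwise distinct because two slices coincide only if they are taken in the same direction with the same value. Property~2 is immediate, with disjointness of $V(i,\ell,d)$ and $V(j,\ell,d)$ for $i \neq j$ following from the same observation. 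For Property~3, slices $\{x_i=a\}$ and $\{x_i=b\}$ with $a \neq b$ are disjoint, so the corresponding vectors are orthogonal, while $\bigcup_{a=1}^{\ell} \{x_i = a\} = [\ell]^d$ shows their supports cover all $\ell^d$ coordinates. For Property~4, if $i \neq j$ then the supports of $V_a(i,\ell,d)$ and $V_b(j,\ell,d)$ overlap in $|\{x : x_i = a,\ x_j = b\}| = \ell^{d-2}$ coordinates, so their inner product equals $\ell^{d-2}\cdot\ell^{-(d-1)} = 1/\ell$. Finally, ordering $V(i,\ell,d)$ by the value $a$ agrees with the lexicographic ordering requested in the statement.

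There is no substantial obstacle here; the reason the product construction works so cleanly is precisely that the $d$ groups are $d$ mutually ``generic'' partitions of the coordinate set into $\ell$ equal blocks, whose pairwise intersections have the uniform size $\ell^{d-2}$ forced by the product structure, which in turn yields the constant cross-group inner product $1/\ell$. The only points requiring a little care are the degenerate case $d = 1$ (a single group, Property~4 vacuous, $V(1,\ell,1)$ the standard basis of $\mathbb{R}^\ell$, with the convention $\ell^{0} = 1$ nonzero coordinate per vector) and checking that the common nonzero value $\ell^{-(d-1)/2}$ is shared by \emph{all} $\ell d$ vectors, which it is. I note that the construction in fact succeeds for every $d \ge 1$; the restriction $1 \le d \le \ell$ in the lemma is simply the range in which it is invoked in the reduction of this section, not a limitation of the construction itself.
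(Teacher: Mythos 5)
Your construction produces exactly the same vectors as the paper's, just described in a cleaner coordinate system: if one writes the index $j \in \{1,\dots,\ell^d\}$ in base~$\ell$ as a tuple $(x_1,\dots,x_d) \in [\ell]^d$, then the paper's $k$-th recursion step (seed strings of length $\ell^k$, concatenated $\ell^{d-k}$ times) yields precisely the indicator of the slice $\{x : x_{d-k+1} = a\}$, i.e.\ your $V_a(d-k+1,\ell,d)$. So this is not a genuinely different construction but a reformulation via the product set $[\ell]^d$ (equivalently, tensor powers), and it is a good one: the four properties become transparent because counts and inner products factor across the $d$ slots, whereas the paper verifies them by tracking coincidences of nonzero positions in the recursively built strings. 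Your verification of all four properties is correct, and your side remark that $d \le \ell$ is not needed for the construction (only for the parameter regime of the reduction) is also accurate; the only convention you gloss over is the paper's lexicographic ordering of the vectors within each $V(i,\ell,d)$, which is immaterial to the lemma.
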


\begin{proof}
We will construct the desired vectors by describing a recursive
procedure. Consider first the $\ell$ strings of length $\ell$ each
with a distinct coordinate being $1$ and all other coordinates $0$
(e.g. for $\ell = 3$, the strings are $100, 010$ and $001$). Let
these ``seed'' strings be $u_1, u_2, \hdots u_{\ell}$ in
lexicographic order. For each of these strings, concatenate
$\ell^{d-1}$ of them side by side to form $\ell$ strings of
length exactly $\ell^d$. To give a better idea of the
procedure, we will get the following strings for $\ell = d = 3$:
\begin{eqnarray*}
v_1 = 100100100100100100100100100 \\
v_2 = 010010010010010010010010010 \\
v_3 = 001001001001001001001001001
\end{eqnarray*}

\noindent Normalizing these $\ell$ strings, i.e. multiplying
each coordinate by $\ell^{(1-d)/2}$ we obtain the vectors

$$V_1(d, \ell, d), V_2(d, \ell, d), \hdots , V_{\ell}(d, \ell, d),$$

\noindent which form the set $V(d, \ell, d)$. This makes a total of
$\ell$ vectors of the incoherent vector system at the lowest
level of our recursion. Note also that these vectors have
pairwise dot product $0$.

In order to construct the vectors of $V(d-1, \ell, d)$, we
create a set of new strings by ``expanding'' each coordinate
of the seed string $u_i$, namely by repeating their coordinates
exactly $\ell$ times in place. Thus, we get a new set of
$\ell$ strings of length $\ell^2$ (e.g. for $\ell = 3$, the
aforementioned strings are $111000000, 000111000$ and
$000000111$). For each of these strings, concatenating
$\ell^{d-2}$ of them side by side, we get $\ell$ strings of
length $\ell^d$. Normalizing these strings, we obtain the
vectors of the set $V(d-1, \ell, d)$. Note that the pairwise dot
products of the vectors in this set are also $0$. Furthermore,
the dot product of a vector in $V(d-1, \ell, d)$ and
another vector in $V(d, \ell, d)$ is exactly $1/\ell$
as there are $\ell^{d-2}$ common nonzero coordinates between
two such vectors and the nonzero entries are all $\ell^{(1-d)/2}$.

\begin{figure}[t]
\centering
\includegraphics[width=\linewidth]{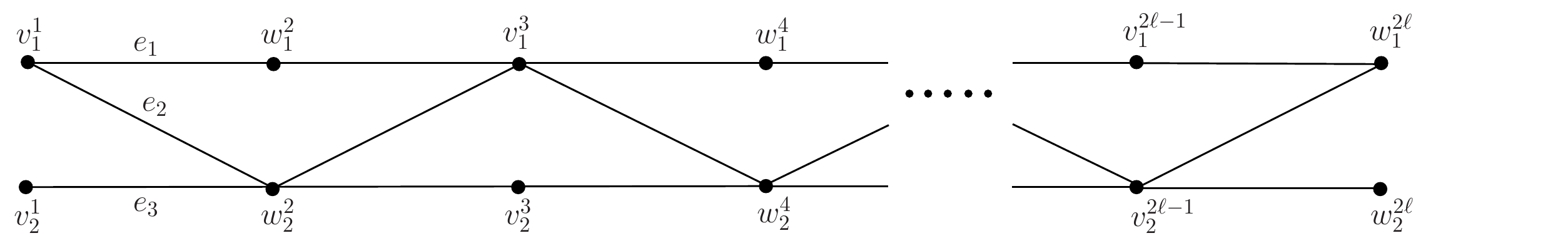}
\caption{A part of a simple bipartite graph representing a
multilayered Smooth Label Cover instance} \label{fig:multi}
\end{figure}

\begin{figure}
\centering
\setlength{\unitlength}{.25in}
\begin{picture}(22,26)(0,0)
\linethickness{0.5pt} \put(1.5,24){\line(1,0){19.5}}
\put(1.5,23){\line(1,0){19.5}} \put(1.5,22){\line(1,0){19.5}}
\put(1.5,20){\line(1,0){19.5}} \put(1.5,19){\line(1,0){19.5}}
\put(1.5,17){\line(1,0){19.5}} \put(1.5,16){\line(1,0){19.5}}
\put(1.5,14){\line(1,0){19.5}} \put(1.5,13){\line(1,0){19.5}}
\put(1.5,11){\line(1,0){19.5}} \put(1.5,10){\line(1,0){19.5}}
\put(1.5,8){\line(1,0){19.5}} \put(1.5,7){\line(1,0){19.5}}
\put(1.5,5){\line(1,0){19.5}} \put(1.5,4){\line(1,0){19.5}}
\put(1.5,2){\line(1,0){19.5}} \put(1.5,1){\line(1,0){19.5}}

\put(1.5,24){\line(0,-1){2}} \put(1.5,20){\line(0,-1){1}}
\put(1.5,17){\line(0,-1){1}} \put(1.5,14){\line(0,-1){1}}

\put(21,24){\line(0,-1){2}} \put(21,20){\line(0,-1){1}}
\put(21,17){\line(0,-1){1}} \put(21,14){\line(0,-1){1}}

\put(1.5,11){\line(0,-1){1}} \put(1.5,8){\line(0,-1){1}}
\put(1.5,5){\line(0,-1){1}} \put(1.5,2){\line(0,-1){1}}

\put(5.5,24.5){\line(0,-1){24}} \put(9.5,24.5){\line(0,-1){24}}
\put(13.5,24.5){\line(0,-1){24}}

\put(21,11){\line(0,-1){1}} \put(21,8){\line(0,-1){1}}
\put(21,5){\line(0,-1){1}} \put(21,2){\line(0,-1){1}}

\put(0.8,23.5){\makebox(0,0){$\Phi_{v_1^1,1}$}}
\put(0.8,22.5){\makebox(0,0){$\Phi_{v_1^1,2}$}}
\put(0.8,19.5){\makebox(0,0){$\Phi_{v_2^1,1}$}}
\put(0.8,16.5){\makebox(0,0){$\Phi_{w_1^2,1}$}}
\put(0.8,13.5){\makebox(0,0){$\Phi_{w_2^2,1}$}}
\put(0.8,10.5){\makebox(0,0){$\Phi_{v_1^3,1}$}}
\put(0.8,7.5){\makebox(0,0){$\Phi_{v_2^3,1}$}}
\put(0.8,4.5){\makebox(0,0){$\Phi_{w_1^4,1}$}}
\put(0.8,1.5){\makebox(0,0){$\Phi_{w_2^4,1}$}}

\put(3.7,24.5){\makebox(0,0){$e_1$}}
\put(7.7,24.5){\makebox(0,0){$e_2$}}
\put(11.7,24.5){\makebox(0,0){$e_3$}}

\put(3.5,23.5){\makebox(0,0){$V_1(\Pi_{e_1}(1),\ell,d)$}}
\put(7.5,23.5){\makebox(0,0){$V_1(\Pi_{e_2}(1),\ell,d)$}}
\put(11.5,23.5){\makebox(0,0){$\overrightarrow{0}$}}
\put(17,23.5){\makebox(0,0){$\overrightarrow{0}$}}

\put(3.5,22.5){\makebox(0,0){$V_1(\Pi_{e_1}(2),\ell,d)$}}
\put(7.5,22.5){\makebox(0,0){$V_1(\Pi_{e_1}(2),\ell,d)$}}
\put(11.5,22.5){\makebox(0,0){$\overrightarrow{0}$}}
\put(17,22.5){\makebox(0,0){$\overrightarrow{0}$}}

\put(3.5,19.5){\makebox(0,0){$\overrightarrow{0}$}}
\put(7.5,19.5){\makebox(0,0){$\overrightarrow{0}$}}
\put(11.5,19.5){\makebox(0,0){$V_1(\Pi_{e_3}(1),\ell,d)$}}
\put(17,19.5){\makebox(0,0){$\overrightarrow{0}$}}

\put(3.5,16.5){\makebox(0,0){$V_2(1,\ell,d)$}}
\put(7.5,16.5){\makebox(0,0){$\overrightarrow{0}$}}
\put(11.5,16.5){\makebox(0,0){$\overrightarrow{0}$}}
\put(17,16.5){\makebox(0,0){$\overrightarrow{0}$}}

\put(3.5,13.5){\makebox(0,0){$\overrightarrow{0}$}}
\put(7.5,13.5){\makebox(0,0){$V_2(1,\ell,d)$}}
\put(11.5,13.5){\makebox(0,0){$V_2(1,\ell,d)$}}
\put(17,13.5){\makebox(0,0){$\overrightarrow{0}$}}

\put(3.5,10.5){\makebox(0,0){$V_3(\Pi_{e_1}(1),\ell,d)$}}
\put(7.5,10.5){\makebox(0,0){$V_3(\Pi_{e_1}(1),\ell,d)$}}
\put(11.5,10.5){\makebox(0,0){$\overrightarrow{0}$}}
\put(17,10.5){\makebox(0,0){$\overrightarrow{0}$}}

\put(3.5,7.5){\makebox(0,0){$\overrightarrow{0}$}}
\put(7.5,7.5){\makebox(0,0){$\overrightarrow{0}$}}
\put(11.5,7.5){\makebox(0,0){$V_3(\Pi_{e_3}(1),\ell,d)$}}
\put(17,7.5){\makebox(0,0){$\overrightarrow{0}$}}

\put(3.5,4.5){\makebox(0,0){$V_4(1,\ell,d)$}}
\put(7.5,4.5){\makebox(0,0){$\overrightarrow{0}$}}
\put(11.5,4.5){\makebox(0,0){$\overrightarrow{0}$}}
\put(17,4.5){\makebox(0,0){$\overrightarrow{0}$}}

\put(3.5,1.5){\makebox(0,0){$\overrightarrow{0}$}}
\put(7.5,1.5){\makebox(0,0){$V_4(1,\ell,d)$}}
\put(11.5,1.5){\makebox(0,0){$V_4(1,\ell,d)$}}
\put(17,1.5){\makebox(0,0){$\overrightarrow{0}$}}

\put(2,20.8){\circle*{0.07}} \put(2,21){\circle*{0.07}}
\put(2,21.2){\circle*{0.07}}

\put(2,17.8){\circle*{0.07}} \put(2,18){\circle*{0.07}}
\put(2,18.2){\circle*{0.07}}

\put(2,14.8){\circle*{0.07}} \put(2,15){\circle*{0.07}}
\put(2,15.2){\circle*{0.07}}

\put(2,11.8){\circle*{0.07}} \put(2,12){\circle*{0.07}}
\put(2,12.2){\circle*{0.07}}

\put(2,8.8){\circle*{0.07}} \put(2,9){\circle*{0.07}}
\put(2,9.2){\circle*{0.07}}

\put(2,5.8){\circle*{0.07}} \put(2,6){\circle*{0.07}}
\put(2,6.2){\circle*{0.07}}

\put(2,2.8){\circle*{0.07}} \put(2,3){\circle*{0.07}}
\put(2,3.2){\circle*{0.07}}

\put(16,24.3){\circle*{0.07}} \put(16.3,24.3){\circle*{0.07}}
\put(16.6,24.3){\circle*{0.07}}

\put(16,20.3){\circle*{0.07}} \put(16.3,20.3){\circle*{0.07}}
\put(16.6,20.3){\circle*{0.07}}

\put(16,17.3){\circle*{0.07}} \put(16.3,17.3){\circle*{0.07}}
\put(16.6,17.3){\circle*{0.07}}

\put(16,14.3){\circle*{0.07}} \put(16.3,14.3){\circle*{0.07}}
\put(16.6,14.3){\circle*{0.07}}

\put(16,11.3){\circle*{0.07}} \put(16.3,11.3){\circle*{0.07}}
\put(16.6,11.3){\circle*{0.07}}

\put(16,8.3){\circle*{0.07}} \put(16.3,8.3){\circle*{0.07}}
\put(16.6,8.3){\circle*{0.07}}

\put(16,5.3){\circle*{0.07}} \put(16.3,5.3){\circle*{0.07}}
\put(16.6,5.3){\circle*{0.07}}

\put(16,2.3){\circle*{0.07}} \put(16.3,2.3){\circle*{0.07}}
\put(16.6,2.3){\circle*{0.07}}
\end{picture}

\caption{Some of the resulting (row) vectors in \textsc{Sparse} instance computed
from the graph in Figure~\ref{fig:multi} by our reduction}
\label{fig:multi_matrix}
\end{figure}

We repeat this procedure by constructing $\ell$ vectors in
the upper level at each step until we finally construct the
set $V(1, \ell, d)$. Specifically, for $1 \leq k \leq d$ at
step $k$, we expand the seed strings that are used in the
previous step to get $\ell$ new seed strings of length $\ell^k$.
Concatenating $\ell^{d-k}$ of these and normalizing the
coordinates, we get the set of vectors $V(d-k+1, \ell, d)$.
It is clear that the pairwise dot products in such a set
are all $0$. Looking at two vectors $u \in V(i, \ell, d)$ and $v
\in V(j, \ell, d)$ for $i \neq j$, we also have that
their dot product is exactly $1/\ell$ since by construction
they have $\ell^{d-2}$ common nonzero coordinates and each
nonzero coordinate is $\ell^{(1-d)/2}$. Hence, we have $\ell d$
vectors in total forming the set $V(\ell, d)$ and all the
properties of an incoherent vector system are satisfied.
\end{proof}

We are ready to describe our reduction. Given the Smooth
Label Cover instance $L$, the derived multilayered instance
$L^{\ell}$ and an incoherent vector system $V(\ell,d)$ as
described above, we first construct $d=2^u7^{Tu}$ column
vectors for each $w \in X_{2j}$ for $j=1,\hdots,\ell/2$.
Specifically, given $w \in X_{2j}$, we define a vector for
each one of the $d$ labels, namely the set:

$$\{V_{2j}(1,\ell,d), V_{2j}(2,\ell,d), \hdots, V_{2j}(d,\ell,d)\}.$$

\noindent Note that the dot products of any two of these vectors is
$1/\ell$. Similar to the reduction presented in the previous
section, the actual column vectors of the matrix $\Phi$ are
composed of $|E^{\ell}|$ blocks, one for each hyper-edge
(Note that $|E^{\ell}| = |E| = O(n^{Tu})$). The column vector
$\Phi_{w,i}$ for $1 \leq i \leq d$ is defined as follows:
The blocks of the vector which correspond to the hyper-edges
that contain $w$ is $V_{2j}(i,\ell,d)$ and all the other
blocks consist of zeros. This reduction is in fact very similar
to the one described for the two-layered Label Cover problem.
Figure~\ref{fig:multi} is an example of
a simple instance of a multilayered Label Cover problem, and
Figure~\ref{fig:multi_matrix} illustrates the overall structure of
the matrix produced by the reduction. For simplicity, we
only show the vectors that belong to the vertices of the
first four layers.

The column vectors of the odd layers of the instance
$L^{\ell}$ are also defined similar to the previous section.
Given $v \in X_{2j-1}$ where $1 \leq j \leq \ell/2$, $\Phi_{v,i}$
consists of $|E^{\ell}|$ blocks for $i=1, \hdots 7^{(T+1)u}$.
The block of $\Phi_{v,i}$ corresponding to hyper-edge
$e \in E^{\ell}$ is $V_{2j-1}(\Pi_{e'}(i), \ell, d)$ where
$e' \in E$ is the edge that uniquely determines $e$ and $\Pi$
is the projection function of $L$. We define $y$ to be
the vector of all $1$s and $k=\sum_{i=1}^{\ell} |X_i| =
\frac{\ell}{2}(|V|+|W|)$. We
let $u$ and $T$ be constants and $\ell \gg d = 2^u 7^{Tu}$
be a constant. As usual, we extend $\Phi$ with an identity
matrix of appropriate size so as to satisfy the condition
of the sparse approximation problem. Note that the size of the
reduction is polynomial. In particular,
$M = \ell^d |E| = \ell^{2^u 7^{Tu}}5^u|W| = O(n^{Tu})$ and
$N = \frac{\ell}{2}(7^{(T+1)u}|V|+2^u7^{Tu}|W|) = O(n^{Tu})$ since
$u,T,\ell$ are all constants, and $|V|$ and $|W|$ are both $O(n^{Tu})$.

\emph{Proof of Theorem \ref{thm_a}:}
\begin{itemize}
\item \textit{Completeness:} Suppose that there is an assignment
which satisfies all the edges of $L$. Consider the same
assignment on $L^{\ell}$ on all layers, repeated $\ell/2$ times.
As noted before, this assignment also strongly satisfies all the
hyper-edges of $L^{\ell}$. Selecting all the $k$ column vectors
defined by this assignment, we see that the coordinates of a block
reserved for a hyper-edge  $e=(x_1,x_2,\hdots,x_{\ell})$ can be
covered by the column vectors corresponding to the labels assigned
to $x_1,x_2,\hdots,x_{\ell}$. Because, they form a set

$$
\{V_1(\Pi_{e'}(i),\ell,d), V_2(j,\ell,d), \hdots, V_{\ell-1}(\Pi_{e'}(i),
\ell,d), V_{\ell}(j,\ell,d)\},
$$

\noindent where $e' \in E$ is the edge
defining $e$, $\Pi_{e'}$ is the constraint on $e'$ and $i,j$ are
the labels assigned to the vertices on alternate layers. But,
$\Pi_{e'}(i)=j$ since $e$ is strongly satisfied. Hence, by the
definition of the incoherent vector system, we have that the
aforementioned set exactly covers the coordinates corresponding
to $e$. Since this is true for all the hyper-edges, the selected $k$ columns
can cover all the $M$ coordinates. In other words there is a
vector $x \in \mathbb{R}^{N}$ with $k$ nonzero entries such
that $\Phi x = y$.

\item \textit{Soundness:} Suppose that no assignment satisfies
more than a fraction $2^{-\gamma u}$ of the edges of $L$. Then,
as noted before, no assignment \emph{strongly} satisfies more
than a fraction $2^{-\gamma u}$ of the hyper-edges in $L^{\ell}$.
Our argument is similar to that of the two layered case. First,
in order to cover all the $M$ coordinates, one needs to select
exactly $1$ column vector from each vertex in all the layers.
Because, if there is a vertex $v$ from which no column vector
is selected, one cannot cover the blocks of hyper-edges incident
to $v$ by selecting one vector from each neighbor of $v$ (The
coordinates reserved for the layer that $v$ belongs to will never
be completely covered). Hence, one needs to select multiple
vectors to cover these blocks and it follows that all the
coordinates cannot be covered by selecting only $k$ column vectors.
It follows that it is sufficient to analyze the case where one
selects exactly $1$ column vector from each vertex in the graph.
In this case however, if there is a hyper-edge which is not strongly
satisfied, by definition of the incoherent vector system, the
block corresponding to that edge cannot be covered. Hence, for
all choices of $x \in \mathbb{R}^N$, we have that
$\|y - \Phi x\|_2 > 0$.
\end{itemize}

It remains to see what the coherence of $\Phi$ is. Similar to the
two layered case, the dot product of two column vectors from
distinct vertices in a layer is $0$, and the dot product of
two column vectors belonging to different layers is smaller than
$1/\ell$. Consider now the column vectors defined for the
vertices in layers of even index. They are analogous to $W$ in
the two layered case; there are $2^u 7^{Tu}$ of them for one vertex,
and any two column vectors corresponding to distinct labels
have dot product $1/\ell$ by construction. The same construction
goes through for the layers of odd index. However, there are
$7^{(T+1)u}$ column vectors for one vertex and hence there are
duplicates. Each constraint of a hyper-edge consists of $\ell-1$
constraints of the two layered instance. Recalling that hyper-edges
in $E^{\ell}$ are in one to one correspondence the edges in $E$
and that the smoothness property is satisfied for $L$, we conclude
that the amount of dot product coming from any two column
vectors corresponding to $a_1,a_2 \in \Sigma_V$ such that
$a_1 \neq a_2$ is at most $1/T$. Thus, the coherence of the
dictionary is upper bounded by $1/T+1/\ell$. Recalling that $\ell \gg d = 2^u 7^{Tu}$,
the dominating term in this expression is $1/T$. By choosing
$T$ arbitrarily large, it follows that \textsc{Sparse} is \textsf{NP}-hard
under dictionaries with coherence $\epsilon$ for arbitrarily
small $\epsilon > 0$. \hfill $\blacksquare$

%%%%%%%%%%%%%%%%%%%%%%%%%%%%%%%%%%%%%%%%%%%%%%%%%%%%%%%%%%%%%%%%%%%%%%%%
\section{Reduction from the Multilayered Unique Label Cover}
The main hindrance that prevents us from pushing the coherence
below a constant is the term $1/T$. No matter how large
a function we assign to $\ell$, the coherence
remains at $1/T$ due to the construction. Ideally, one would like
to avoid possible duplications of vectors in the odd
layers of the multilayered instance, thereby completely getting
rid of $1/T$ and selecting $\ell$ as large as possible.
This is precisely what the constraint satisfaction problem
mentioned in the famous Unique Games Conjecture provides.
A Unique Label Cover instance $L$ is defined as follows:
$L = (G(V,W,E),R,\Pi)$ where

\begin{itemize}
\item $G(V,W,E)$ is a regular bipartite graph with vertex sets $V$
and $W$, and the edge set $E$. \vspace{1mm}

\item $\Sigma=\{1,2,\hdots,R\}$ is the label sets associated with
both $V$ and $W$. \vspace{1mm}

\item $\Pi$ is the collection of constraints on the edge set,
where the constraint on an edge $e$ is defined as a bijective
function $\Pi_e\colon \Sigma \rightarrow \Sigma$.
\end{itemize}

As usual, in the problem associated with a given instance,
the goal is to satisfy as many constraints as possible by
finding an assignment $A\colon V \rightarrow \Sigma$,
$A\colon W \rightarrow \Sigma$. The following is the well-known
conjecture by Khot \cite{Khot-UGC}:

\begin{con} (Unique Games Conjecture \cite{Khot-UGC})
Let $L$ be defined as above. Given $\epsilon,\delta>0$,
there exists a constant $R(\epsilon,\delta)$ such that it is \textsf{NP}-hard
to distinguish between the following two cases:

\begin{itemize}
\item YES. There is an assignment $A\colon V \rightarrow \Sigma$,
$A\colon W \rightarrow \Sigma$ such that at least $(1-\epsilon)$
fraction of the constraints in $\Pi$ are satisfied.

\item NO. No assignment can satisfy more than a fraction
$\delta$ of the constraints in $\Pi$.
\end{itemize}
\end{con}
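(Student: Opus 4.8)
The statement above is the Unique Games Conjecture, which is open; what follows is the most natural line of attack and the point at which it stalls. The plan is to start, exactly as in every strong inapproximability reduction, from the hard label-cover gap instance of Theorem~\ref{Raz} (the PCP theorem together with Raz's parallel repetition) and to transform it into an equivalent gap instance whose edge constraints are \emph{bijections} $\Pi_e\colon \Sigma \to \Sigma$ on a common large alphabet, while keeping completeness arbitrarily close to $1$ and soundness arbitrarily close to $0$. The structural obstruction one must design around is that the constraints produced by PCP composition and parallel repetition are \emph{projections} --- many-to-one functions $\Sigma_V \to \Sigma_W$ with $|\Sigma_V| \gg |\Sigma_W|$ (this is precisely the mismatch exploited in the Smooth Label Cover of Theorem~\ref{smooth}) --- and there is no generic gadget turning a projection constraint into a bijective one without collapsing the completeness--soundness gap.

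The first concrete step I would take is to reduce not from generic label cover but from a constraint satisfaction problem that is already bijective: $\Gamma$-Max-Lin over the cyclic group $\mathbb{Z}_R$, i.e. systems of two-variable equations $x_i - x_j = c_{ij} \pmod{R}$, each of which is literally a bijective constraint on $\{1,\dots,R\}$. A hardness result for $\Gamma$-Max-Lin with completeness $1-\epsilon$ and soundness $\delta$ is verbatim a Unique Label Cover instance on its equation graph, so the conjecture reduces to amplifying the known long-code/Fourier-analytic hardness of $\Gamma$-Max-Lin --- which already gives completeness $1-\epsilon$ but only soundness $\approx 1/R$ --- down to soundness $\delta$ for every fixed $\delta > 0$. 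The amplification tool one naturally reaches for is parallel repetition followed by a long-code-free ``degree reduction'' that restores two-variable bijective form; an alternative step would be to bypass repetition and construct directly a sub-constant-soundness PCP whose verifier's acceptance predicate is a single bijection of a large alphabet.

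The main obstacle --- and the reason the conjecture has resisted proof for two decades --- is that \emph{parallel repetition does not preserve uniqueness}: the $u$-fold repeated constraint on alphabet $\Sigma^u$ is again a projection, not a bijection, since a strategy for the repeated game need not be a product of per-coordinate strategies, so the soundness-decay theorems available for two-prover games do not apply inside the class of unique games. One would therefore need either a repetition-type soundness amplification that stays within the bijective (or $d$-to-$1$ for bounded $d$) regime, or a genuinely new low-soundness PCP whose predicate is unique --- yet every known low-soundness PCP (via low-degree testing, the Hadamard/long code, and composition) produces either projection constraints or non-unique linear predicates whose soundness is bounded away from $0$. Compounding the difficulty, the Arora--Barak--Steurer subexponential-time algorithm and the fact that the basic-SDP integrality-gap (Khot--Vishnoi) instances are refuted by a few levels of the Sum-of-Squares hierarchy show that any valid YES/NO gap must be ``spread out'' in a way that defeats the crudest amplification schemes; and the recently established $2$-to-$1$ and $2$-to-$2$ Games Theorems give hardness of Unique Games with soundness $\delta$ but completeness only about $1/2$, not the $1-\epsilon$ the conjecture demands. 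Absent a fundamentally new gap-amplification mechanism that respects bijective structure, the statement must stand as a conjecture.
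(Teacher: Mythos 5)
You are right that this is not a theorem the paper proves: the paper states it as a \emph{conjecture}, attributed to Khot \cite{Khot-UGC}, and uses it only as an assumption in the reduction behind Theorem~\ref{thm_b}. There is no proof in the paper to compare against, so correctly identifying the statement as open and declining to ``prove'' it is exactly the right move. Your survey of the obstructions (projection vs.\ bijective constraints, failure of parallel repetition to preserve uniqueness, the Arora--Barak--Steurer subexponential algorithm, the imperfect-completeness barrier in the $2$-to-$1$/$2$-to-$2$ Games results) is accurate and, if anything, more than the paper requires --- the paper simply imports the conjecture as a black box. The only minor mismatch is one of framing: the task asked for a proof of the statement, and the correct answer here is that none exists and the paper offers none; everything beyond your first sentence is useful context rather than a proof attempt.
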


It is possible that the conjecture is false, but the problem
Unique Games is still not in \textsf{P}. In fact, this is the version
stated in Theorem~\ref{thm_b}. We rule out the existence of
a polynomial time algorithm for \textsc{Sparse} under certain
conditions given that the problem described above is not in \textsf{P}.

Given a Unique Label Cover instance $L$, one can define an
$\ell$-layered Unique Label Cover instance as follows (This
is in same spirit to the one defined in previous section, see
Figure~\ref{fig:multilayered}): $L^{\ell} = (G^{\ell}(X_1,
\hdots, X_{\ell},E^{\ell}), R, \Pi^{\ell})$ where $\ell$ is
an even positive integer and

\begin{itemize}
\item $G^{\ell}(X_1, \hdots, X_{\ell},E^{\ell})$ is a multipartite
hyper-graph with vertex sets $X_1, X_2, \hdots, X_{\ell}$ and the
hyper-edge set $E$. \vspace{1mm}

\item $X_i = V$ for odd $i$, and $X_i = W$ for even $i$, where
$1 \leq i \leq \ell$. \vspace{1mm}

\item $E^{\ell}$ consists of all hyper-edges of the form
$(v,w,v,w,\hdots,v,w)$ on $\ell$ vertices belonging to $X_1, \hdots, X_{\ell}$,
respectively where $v \in V$, $w \in W$ and $(v,w) \in E$.

\item $\Pi^{\ell}$ is the collection of constraints on the
hyper-edge set, where the constraint on a hyper-edge
$e=(x_1,x_2,\hdots,x_{\ell-1},x_{\ell})=(v,w,\hdots,v,w)$ is itself
defined as a collection of $\ell-1$ constraints of the form
$\Pi_{(x_{2i-1},x_{2i})}\colon \Sigma \rightarrow \Sigma$ for $i=1,\hdots,\ell/2$, and
$\Pi_{(x_{2i+1},x_{2i})}\colon \Sigma \rightarrow \Sigma$ for $i=1,\hdots,\ell/2-1$, where
$x_i \in X_i$ for $i=1,\hdots,\ell$, and all the constraints are
identical to $\Pi_{(v,w)}$.
\end{itemize}

As usual, one is asked to find an assignment for all the
vertices of the instance, i.e. construct functions
$A_i\colon X_i \rightarrow \Sigma$ for $i=1, \hdots, \ell$,
where the goal is to satisfy as many hyper-edges as possible.
A constraint $\Pi_e$ is said to be \emph{strongly satisfied}
if all of the $\ell-1$ constraints defined for $e$ are satisfied.
The reduction from $L$ to $L^{\ell}$ is also
quite straightforward and is in the same spirit as the one
described in the previous section. Again, it is important here
to see that each hyper-edge in $E^{\ell}$ is uniquely
determined by an edge in $E$. Given the reduction, if there
is an assignment for $L$ which satisfies at least $(1-\epsilon)$
fraction of the constraints, then the same assignment
satisfies the set of hyper-edges that are uniquely determined
by the satisfied edges in $L$, i.e. at least $(1-\epsilon)$ of
the constraints in $L^{\ell}$ are \emph{strongly} satisfied. It
is also easy to see that if no assignment satisfies more than
a fraction $\delta$ of the constraints  in $L$, then no assignment
\emph{strongly} satisfies more than a fraction $\delta$
of the constraints in $L^{\ell}$, since at most
$(1-\delta)$ fraction of the constraints can be strongly
satisfied by the same reasoning we used for the completeness.

We now describe our reduction. Given the Unique
Label Cover instance $L$, the derived multilayered instance
$L^{\ell}$ and an incoherent vector system $V(\ell,R)$,
we first construct $R$ column
vectors for each $w \in X_{2j}$ for $j=1,\hdots,\ell/2$.
Specifically, given $w \in X_{2j}$, we define a vector for
each one of the $R$ labels, namely the set:

$$\{V_{2j}(1,\ell,R), V_{2j}(2,\ell,R), \hdots, V_{2j}(R,\ell,R)\}.$$

\noindent Note that the dot products of any two of these vectors is
$1/\ell$. The actual column vectors of the matrix $\Phi$ are
composed of $|E^{\ell}|$ blocks, one for each hyper-edge. The
column vector $\Phi_{w,i}$ for $1 \leq i \leq R$ is defined as follows:
The blocks of the vector which correspond to the hyper-edges
that contain $w$ is $V_{2j}(i,\ell,R)$ and all the other
blocks consist of zeros.

The column vectors of the odd layers of the instance
$L^{\ell}$ are defined as follows: Given $v \in X_{2j-1}$ where
$1 \leq j \leq \ell/2$, $\Phi_{v,i}$ consists of $|E^{\ell}|$ blocks
for $i=1, \hdots R$. The block of $\Phi_{v,i}$
corresponding to hyper-edge $e \in E^{\ell}$ is $V_{2j-1}(\Pi_{e'}(i),
\ell, R)$ where $e' \in E$ is the edge that uniquely defines
$e$ and $\Pi$ is the projection function of $L$. We
define $y$ to the vector of all $1$s, $k=\sum_{i=1}^{\ell} |X_i|$, and
extend $\Phi$ with an identity matrix as usual. Finally, we let
$\ell = n^C$ where $C(\epsilon)$ is a large constant depending on
$\epsilon>0$ and satisfying
$$
\max \left\{ |X_1|, \hdots, |X_{\ell}|, |E^{\ell}|\right\} \leq n^{C \epsilon}.
$$

\noindent Note that there exists such a constant since
$|X_i|$ and $|E^{\ell}|$ are of polynomial size. The size of the
reduction is also polynomial as
there are $N = \frac{\ell}{2}R(|V|+|W|)=\frac{R}{2}n^C(|V|+|W|)$ vectors with
$M = \ell^R |E|=n^{CR}|E|$
coordinates, where $C$ and $R$ are constants. With this choice of parameters,
we attain our goal of pushing the coherence below a constant.
Since there are no duplicate vectors for a given vertex in a
layer, the coherence stays at $1/\ell$. But, we have that
$$
k \leq {\max \left\{|X_1|, \hdots, |X_{\ell}|, |E^{\ell}|\right\}} \cdot n^C
\leq n^{C(1+\epsilon)} = \ell^{1+\epsilon}.
$$

\noindent Hence, $k^{\frac{1}{1+\epsilon}} \leq \ell < k$, which is to say that
$1/k < 1/\ell \leq k^{-\frac{1}{1+\epsilon}} = k^{-1+\epsilon'}$ for a suitably
chosen $\epsilon'$. There remains to check the completeness and the soundness
of the reduction.

\emph{Proof of Theorem \ref{thm_b}:}
\begin{itemize}
\item \textit{Completeness:} Suppose that there is an assignment
which satisfies at least $(1-\epsilon)$ fraction of the edges of $L$.
Consider the same assignment on $L^{\ell}$ on all layers, repeated
$\ell/2$ times. Since each hyper-edge in $E^{\ell}$ is uniquely
determined by an edge in $E$, this assignment strongly satisfies
at least $(1-\epsilon)$ fraction of the hyper-edges of $L^{\ell}$.
Selecting all the $k$ column vectors
defined by this assignment, we see that the coordinates of a block
reserved for a hyper-edge  $e=(x_1,x_2,\hdots,x_{\ell})$ can be
covered by the column vectors corresponding to the labels assigned
to $x_1,x_2,\hdots,x_{\ell}$. Because, they form a set

$$
\{V_1(\Pi_{e'}(i),\ell,d), V_2(j,\ell,d), \hdots, V_{\ell-1}(\Pi_{e'}(i),
\ell,d), V_{\ell}(j,\ell,d)\},
$$

\noindent where $e' \in E$ is the edge
defining $e$, $\Pi_{e'}$ is the constraint on $e'$ and $i,j$ are
the labels assigned to the vertices on alternate layers. But,
$\Pi_{e'}(i)=j$ since $e$ is strongly satisfied. Thus, at least
$(1-\epsilon)$ fraction of the $M$ coordinates of $y$ can be covered
upon a suitable choice of $x \in \mathbb{R}^N$ with all positive entries.
In other words, with such a choice, we have that $\Phi x$ has $1$s on
at least $(1-\epsilon)$ fraction of the coordinates and other coordinates
are between $0$ and $1$. This implies $\|y - \Phi x\|_2 \leq \sqrt{\epsilon M}$.

\item \textit{Soundness:} This requires a relatively more elaborate
analysis compared to the previous cases. Suppose that no assignment
satisfies more than a fraction $\delta$ of the edges of $L$. Take
a random hyper-edge $e=(v,w,\hdots,v,w)$ in $L^{\ell}$. Take all the
$\ell-1$ constraints forming the constraint function $\Pi_e$. Then,
the expected number of satisfied constraints among these is at most
$\delta(\ell-1)$. For simplicity of the argument, let us first
consider the case where all the constraints are not satisfied. In
this case, by selecting the same assignment for the vertices in odd
layers, we can cover half of the block corresponding to $e$.
Similarly, by selecting the same assignment for the vertices in
even layers, we can cover another half. Of course, there are overlaps
between these two halves. We call this assignment the
\emph{canonical assignment}, which selects exactly one column vector
from each vertex. Note that no other assignment selecting one column
vector from each vertex can do better than covering half of the
coordinates of $y$ by the even and odd layers separately. If
different assignments are used in different layers, there will be overlaps
in the coordinates of certain hyper-edges by our construction and
less than half of the coordinates will be covered in this case. We will argue
that it suffices to analyze the canonical assignment. In other words,
selecting multiple column vectors from vertices will not increase
the number of coordinates that can be covered.

Consider an assignment where we select multiple column vectors
from vertices. It is clear that such an assignment can be attained
starting from the canonical assignment and performing an interchange of
vectors by excluding a vector from a vertex $v$ and including a new vector
to another vertex $w$ thereby increasing the number of vectors we
select from $w$ by $1$. Assume
without loss of generality that $v$ is in an odd layer and
$w$ is in an even layer. As discussed in the previous
paragraph, the exclusion of the single vector from $v$ might result in a ``loss''
of at least $1/\ell-1/(2\ell) = 1/(2\ell)$ fraction of the coordinates of
the blocks incident to $v$. This is by the fact that the vector can cover
$1/\ell$ fraction of the coordinates and there might be at most $1/(2\ell)$
overlaps with vectors in even layers. Take a hyper-edge incident to $w$.
Let us analyze the ``gain'' contributed by the additional vector. We
can assume without loss of generality that the vertices of the hyper-edge
have at least one vector. Otherwise, we lose a fraction of $1/(2\ell)$
of its coordinates by the absence of a vector at each vertex making
the loss already greater than or equal to the gain. Then,
the fraction of overlaps between the new vector and the vectors in
the \emph{even} layers is $(1/\ell^2) \cdot (\ell/2) = 1/(2\ell)$. Thus,
the contribution of the new vector in $w$ is at most $1/\ell - 1/(2\ell)
= 1/(2\ell)$. Since the gain is not more than the loss, it
suffices to analyze the canonical assignment.

Let us now analyze the canonical assignment.
As noted, even with this assignment, there are overlaps
between indices covered by the odd layers and even layers since we
assumed that no edge is satisfied. We calculate the fraction of
overlaps as follows: Take a vertex of even layer, a hyper-edge $e$ incident
to this vertex and consider the block corresponding to $e$. The fraction of
overlaps of the column vector corresponding to this vertex with all
the ones on the odd layers is $(1/\ell^2) \cdot (\ell/2) =
1/(2\ell)$ by the properties of the incoherent vector system we use.
More explicitly, there are $\ell/2$ vertices of odd layers in a
hyper-edge and $1/\ell^2$ fraction of the coordinates overlap
with a vertex in one odd layer, assuming that all the $\ell-1$
constraints of the hyper-edge are not satisfied. Now, since there are
$\ell/2$ vertices of even layers and vertices in distinct even layers
cover distinct coordinates, it follows that the total
fraction of overlaps is $1/(2\ell) \cdot (\ell/2) = 1/4$. Thus,
the fraction of coordinates that can be covered for a hyper-edge
is $1/2+1/2- 1/4 = 3/4$. Since the fraction of satisfied constraints
of the hyper-edge is $\delta(\ell-1)$ in expectation, this can
only make an extra contribution proportional to $\delta$ to the number
of coordinates that can be covered. As a result, the coordinates
that cannot be covered for a hyper-edge is at least
$1/4-\Theta(\delta)$. Since our argument runs for a random hyper-edge,
taking all the $M$ coordinates formed by all the hyper-edges,
we have that $\|y - \Phi x\|_2 \geq \sqrt{(1/4-\Theta(\delta)) M}$.
\end{itemize}

The theorem now follows since the ratio of the values in the soundness and
the completeness can be arbitrarily large by arbitrarily small values of
$\epsilon$ and $\delta$. \hfill $\blacksquare$

%%%%%%%%%%%%%%%%%%%%%%%%%%%%%%%%%%%%%%%%%%%%%%%%%%%%%%%%%%%%%%%%%%%%%
%%%%%%%%%%%%%%%%%%%%%%%%%%%%%%%%%%%%%%%%%%%%%%%%%%%%%%%%%%%%%%%%%%%%%
%%%%%%%%%%%%%%%%%%%%%%%%%%%%%%%%%%%%%%%%%%%%%%%%%%%%%%%%%%%%%%%%%%%%%
\section{Final Remarks}
This article suggests that \textsc{Sparse} is only tractable for
very special classes of dictionaries. There remain the following open
problems we want to pose:
\begin{itemize}
\item Can the result based on Unique Games be strengthened to
\textsf{NP}-hardness? For $\mu = k^{-1+\epsilon}$, we can only rule out
a constant factor approximation due to the imperfect completeness.

\item Can one extend the complexity results in this article by
parameterizing with the Restricted Isometry Constant?

\item The results about OMP thus far only provide approximations.
Is it possible to find an exact solution for the problem by using
only the assumption $\mu = O(k^{-1})$? On the other hand,
a super constant approximation in the case $\mu = k^{-1+\epsilon}$
might also be possible. Our results do not rule out this.

\item UG-hardness of the problem suggests that there may be algorithmic
solution using semi-definite programming. Are there semi-definite
programming based algorithms for sparse approximation?
\end{itemize}

\textbf{Acknowledgment:} We thank Anna C. Gilbert for pointing out
some algorithmic results on sparse approximation and providing a
chart on the complexity of the problem. We also thank the anonymous
referees whose comments helped improve the presentation.

%We also thank the
%anonymous referees for their helpful comments which substantially
%improved the presentation.

{
\bibliographystyle{plain}
\bibliography{reference}
}
\end{document}